\tikzstyle{overbrace text style}=[font=\tiny, above, pos=.5, yshift=5pt]
\tikzstyle{overbrace style}=[decorate,decoration={brace,raise=5pt,amplitude=3pt}]
\definecolor{cadmiumgreen}{rgb}{0.0, 0.42, 0.24}
\newtheorem{theorem}{Theorem}[section]
\newtheorem{lemma}[theorem]{Lemma}
\theoremstyle{definition}
\newcommand{\one}[1]{\mathbf{1}\left\{#1\right\}}
\newcommand{\cost}{\text{cost}}
\newcommand{\bz}{\mathbf{z}}
\newcommand{\bt}{\mathbf{t}}
\newcommand{\bx}{\mathbf{x}}
\newcommand{\SC}{\text{\normalfont SC}}
\newcommand{\MC}{\text{\normalfont MC}}
\begin{document}

\allowdisplaybreaks

\title{\bf On Discrete Truthful Heterogeneous \\Two-Facility Location}

\author{Panagiotis Kanellopoulos}
\author{Alexandros A. Voudouris}
\author{Rongsen Zhang}

\affil{School of Computer Science and Electronic Engineering \\ University of Essex, UK}

\renewcommand\Authands{ and }
\date{}

\maketitle

\begin{abstract}
We revisit the discrete heterogeneous two-facility location problem, in which there is a set of agents that occupy nodes of a line graph, and have private approval preferences over two facilities. When the facilities are located at some nodes of the line, each agent derives a cost that is equal to her total distance from the facilities she approves. The goal is to decide where to locate the two facilities, so as to (a) incentivize the agents to truthfully report their preferences, and (b) achieve a good approximation of the minimum total (social) cost or the maximum cost among all agents. For both objectives, we design deterministic strategyproof mechanisms with approximation ratios that significantly outperform the state-of-the-art, and complement these results with (almost) tight lower bounds. 
\end{abstract}

\section{Introduction}
In the classic truthful single-facility location problem, there is a set of agents with private positions on the line of real numbers, and a public facility (such as a library or a park) whose location we need to decide. This decision must be made so as to (a) incentivize the agents to truthfully reveal their positions (an agent would be willing to lie if this leads to the facility being located closer to her true position), and (b) optimize a social objective (such as the total distance of the agents from the facility location, or the maximum distance). Since the celebrated paper of \citet{procaccia09approximate}, this problem and its many variants have been studied extensively in the literature on {\em approximate mechanism design without money}. For a comprehensive introduction to the various different facility location models that have been considered over the years, we refer the interested reader to the recent survey of~\citet{survey21}.

A recent stream of papers have focused on {\em heterogeneous facility location} problems, with multiple facilities (typically, two) that are different in nature (e.g., a school and a bar). As such, the agents care both for the location and the types of the facilities, aiming for the facilities they like the most to be as close to their position as possible. To give an example, a family would like to be closer to a school than to a bar, whereas a single person might want the opposite. Many settings have been proposed to model the different preferences the agents may have about the facilities (see the discussion in the related work). With few exceptions, all of these models assume that both facilities can be placed at any point of the real line, even at the same one. \citet{serafino2016} deviated from these assumptions and studied a discrete version of the problem, where the line is a discrete graph, the agents occupy nodes of the line (which is common knowledge) and have {\em approval} preferences over two facilities, which can only be placed at {\em different} nodes of the line. Given facility locations, the cost of an agent is defined as the total distance from the facilities she approves.

\citeauthor{serafino2016} presented bounds on the approximation ratio of deterministic and randomized mechanisms in terms of both social objectives of interest. In particular, for the social cost, they showed that the best possible approximation ratio of deterministic mechanisms is between $9/8$ and $n-1$, where $n$ is the number of agents. In contrast, they designed a randomized mechanism that always outputs a solution with minimum expected social cost. For the maximum cost objective, they showed that the best possible approximation ratio of deterministic mechanisms is between $3/2$ and $3$, and that of randomized mechanisms is between $4/3$ and $3/2$. In this paper, we focus exclusively on deterministic mechanisms, and improve upon the bounds of \citeauthor{serafino2016} for both the social and the maximum cost. 

\subsection{Our contribution}

The upper bounds shown by \citeauthor{serafino2016} for the social and the maximum cost both follow by the same deterministic mechanism, named {\sc TwoExtremes}, which works along the lines of the mechanism considered by \citet{procaccia09approximate} for homogeneous $2$-facility location. {\sc TwoExtremes} locates one of the facilities at the node occupied by the leftmost agent among those that approve it, and the other facility at the node occupied by the rightmost agent among those that approve it; in case of a collision, one of the facilities is moved a node to the left or the right. There are two main reasons for the deficiency of {\sc TwoExtremes}: (i) the boundary agents (leftmost and rightmost) among those that approve a facility may be rather far away from the median such agent, whose node would be the ideal location for the facility, and (ii), it does not exploit the available information about the position of the agents in any way. Our improved mechanisms take care of these two reasons: We place the facilities closer to median agents (without breaking strategyproofness), and exploit the information about the agent positions of the agents. 

For the social cost, we design the {\sc Fixed-or-Median-Nearest-Empty} (FMNE) mechanism with an approximation ratio of at most $17/4=4.25$. The mechanism switches between two cases based on the structure of the line: If there are no empty nodes, it fixes the locations of the facilities to be the two central nodes of the line; otherwise, if there are empty nodes, it locates one of the facilities at the position of the median agent among those that approve it, and the other facility at one of the nearest empty nodes to the median agent among those that approve it. We complement this result with an improved lower bound of $4/3$ on the approximation ratio of all mechanisms, which follows by two instances with only three agents and no empty nodes. Motivated by this lower bound construction, we then focus on instances with three agents, and design the $3$-agent {\sc Priority-Dictatorship} mechanism that achieves the best-possible bound of $4/3$.

For the maximum cost, we design a parameterized class of mechanisms {\sc $\alpha$-Left-Right} ($\alpha$-LR), each member of which partitions the line into two parts, from the first node to node $\alpha$, and from node $\alpha+1$ to the last node. Then, the decision about the locations of the facilities is based on the preferences of the agents included in the two parts. We show that all mechanisms of the class are strategyproof, and there are members with approximation ratio at most $2$. In particular, when the size $m$ of the line is an even number, the bound is achieved by $m/2$-LR, and when $m$ is odd, it is achieved by $(m+1)/2$-LR. Finally, we show a tight lower bound of $2$ on the approximation ratio of all strategyproof mechanisms, using a construction involving a sequence of five instances with three agents and no empty nodes. 

\begin{table}[t]
    \centering
    \begin{tabular}{c|cc}
                        & Lower bound & Upper bound \\ \hline 
       Social cost      &   $4/3^\star$ ($9/8$)    & $17/4$ ($n-1$)      \\
       Maximum cost     &   $2$ ($3/2$)      & $2$ ($3$)         \\ \hline
    \end{tabular}
    \caption{An overview of our bounds on the approximation ratio of deterministic strategyproof mechanisms for the social cost and the maximum cost. The bounds in parentheses are the previously best known ones shown by \citet{serafino2016}. The lower bound of $4/3$ marked with a $\star$ is tight for instances with three agents.}
    \label{tab:results}
\end{table}

An overview of our bounds, and how they compare to the previously best ones shown by \citet{serafino2016}, is given in Table~\ref{tab:results}.

\subsection{Related work}
As already mentioned above, the survey of \citet{survey21} nicely discusses the many different facility models that have been considered over the years in the literature on approximate mechanism design without money. Here, we will mainly discuss papers on heterogeneous facility location models that are closely related to ours. Besides the work of \citet{serafino2016}, our paper here, a few ones on characterizations of onto strategyproof mechanisms for the single-facility location problem in discrete lines, cycles~\citep{DFMN} and trees \citep{FMfacility}, and some related to truthful single-winner voting on a line metric (e.g., \citep{feldman2016voting,FV2021facility}), the related literature has mainly focused on continuous settings, where the facilities can be located at any point of the line, even the same one. 

The first heterogeneous facility location model, combining elements from the classic single-facility location problem and the obnoxious single-facility location problem~\citep{cheng2011obnoxious,cheng2013obnoxious}, was independently proposed and studied by \citet{feigenbaum2015hybrid} and \citet{zou2015dual}. In this setting, there are two facilities to be located on the real line, and the agents have {\em dual} preferences over the facilities; that is, an agent likes or dislikes a facility. The authors showed bounds on the approximation ratio of deterministic and randomized strategyproof mechanisms for different cases depending on whether the positions or the preferences of the agents are their private information (and can thus lie about them). \citet{kyropoulou2019constrained} considered an extension of this model, where the location space of the two facilities is a constrained region of the Euclidean space. 

\citet{chen2020optional} studied a setting with agents that have {\em optional} (or, {\em approval}) preferences over the facilities; that is, an agent either likes a facility or is indifferent about it. The authors consider two different cost functions of the agents, one that is equal to the distance from the closest facility that the agent approves, and one that is equal to the distance from the farthest such facility. \citet{li2020optional-improved} studied an extension of this setting in more general metrics (beyond the line), and designed mechanisms that improve some of results of \citeauthor{chen2020optional}. \citet{deligkas2021limited} considered a similar preference model, but with the difference that the goal is to locate just one of the facilities (and, more generally, $k$ out of $m$), and not all of them. 

\citet{anastasiadis2018heterogeneous} considered a model that combines dual and optional preferences, in the sense that the agents can like, dislike or be indifferent about a facility. \citet{fong2018fractional} studied  a setting with {\em fractional} preferences, where each agent has a weight in $[0,1]$ for each facility indicating how much she likes it. Finally, \citet{Xu2021minimumdistance} focused on the problem where the locations of the locations of the two facilities must satisfy a minimum distance requirement.

\section{Preliminaries}
We consider the discrete two-facility location problem. An instance $I$ of this problem consists of a set $N$ of $n \geq 2$ {\em agents}, two {\em facilities}, and a line graph with $m \geq n$ nodes. Each agent occupies a node $x_i$ of the line, such that different agents occupy different nodes. By $\bx$ we denote the {\em position profile} consisting of the positions of all agents (i.e., the nodes they occupy) as well as the positions of possible empty nodes; the position profile is assumed to be {\em common knowledge}. Every agent $i$ also has a private {\em approval preference} $\bt_{i} \in \{0,1\}^2$ over the two facilities: If $t_{ij}=1$, agent $i \in N$ approves facility $j \in \{1,2\}$; otherwise, she disapproves it. By $\bt = (\bt_i)_{i \in N}$ we denote the {\em preference profile} consisting of the preferences of all agents. It will be useful to denote by $N_j$ the set of agents that approve facility $j \in \{1,2\}$. Clearly, the two sets $N_1$ and $N_2$ need not be disjoint if there are agents that approve both facilities. As $\bx$ and $\bt$ implicitly include all the information related to an instance, we denote $I=(\bx,\bt)$.

A {\em feasible solution} $\bz=(z_1,z_2)$ determines the node $z_j$ where each facility $j \in \{1,2\}$ is located, so that $z_1 \neq z_2$.
Given a feasible solution $\bz$, the cost of any agent $i$ in instance $I$ is her {\em total} distance from the facilities she approves, i.e., 
\begin{align*}
\cost_i(\bz|I) = \sum_{j \in \{1,2\}} t_{ij} \cdot d(i,j),
\end{align*}
where $d(i,j) = |x_i-z_j|$ is the distance between agent $i$ and facility $j$. 

A {\em mechanism} takes as input an instance and outputs a feasible solution. A mechanism $M$ is said to be {\em strategyproof} if the solution $M(I)$ it computes when given as input the instance $I=(\bx,\bt)$ is such that no agent $i$ has incentive to report a false preference $\bt_i' \neq \bt_i$ to decrease her cost, i.e., 
\begin{align*}
\cost_i(M(I)|I) \leq \cost_i(M(\bx, (\bt_i',\bt_{-i}))|I),
\end{align*}
where $(\bt_i',\bt_{-i})$ is the preference profile according to which agent $i$'s preference is $\bt_i'$, while the preference of any other agent is the same as in $\bt$.

We consider two well-known social objectives, which are functions of feasible solutions. 
Given an instance $I$, the {\em social cost} of a feasible solution $\bz$ is the total cost of all the agents, i.e., 
\begin{align*}
\SC(\bz|I) = \sum_{i \in N} \cost_i(\bz|I).
\end{align*}
The {\em max cost} of $\bz$ is the maximum cost among all agents, i.e., 
\begin{align*}
\MC(\bz|I) = \max_{i \in N} \cost_i(\bz|I).
\end{align*}
Let $\SC^*(I)=\min_z \SC(z|I)$ be the minimum possible social cost for instance $I$, achieved by any feasible solution. Similarly, let $\MC^*(I)=\min_z \MC(z|I)$ be the minimum possible maximum cost for $I$.

For any social objective $f \in \{\SC,\MC\}$, the {\em $f$-approximation ratio} of a mechanism $M$ is the worst-case ratio (over all possible instances)  between the objective value of the solution computed by $M$ over the minimum possible objective value among all feasible solutions, i.e.,
\begin{align*}
\rho(M) = \sup_{I} \frac{f(M(I)|I)}{f^*(I)}.
\end{align*}
Our goal is to design strategyproof mechanisms with an as low $f$-approximation ratio as possible (close to $1$). 


\section{A constant-approximation strategyproof mechanism for the social cost}
We start with the social cost objective. For general instances with $n$ agents, we design the strategyproof mechanism {\sc Fixed-or-Median-Nearest-Empty} (FMNE) with approximation ratio $17/4$, thus greatly improving upon the previous bound of $n-1$ of \citet{serafino2016}. Our mechanism exploits the known information about the position profile, and distinguishes between two cases depending on whether the given instance contains empty nodes or not. If there are no empty nodes, FMNE locates the facilities next to each other at central nodes of the line (in particular, nodes $\lfloor n/2 \rfloor$ and $\lfloor n/2 \rfloor+1$); this is the {\sc Fixed} part of the mechanism. If there are empty nodes, FMNE locates facility $1$ at the node occupied by the median agent among those that approve facility $1$, and facility $2$ at the empty node that is nearest to the node occupied by the median agent among those that approve facility $2$; this is the {\sc Median-Nearest-Empty} part of the mechanism. See Mechanism~\ref{mech:FMNE}. 

\newcommand\mycommfont[1]{\normalfont\textcolor{blue}{#1}}
\SetCommentSty{mycommfont}
\begin{algorithm}[t]
\SetNoFillComment
\caption{\sc Fixed-or-Median-Nearest-Empty (FMNE)}
\label{mech:FMNE}
{\bf Input:} Instance $I$ with $n$ agents \;
{\bf Output:} Feasible solution $\bz = (z_1,z_2)$ \;
\uIf(\tcp*[h]{{\sc Fixed} part}){there are no empty nodes}{ 
    $z_1 \gets \lfloor n/2 \rfloor$\;
    $z_2 \gets \lfloor n/2 \rfloor +1$\;
}
\Else(\tcp*[h]{{\sc Median-Nearest-Empty} part}){
    \For{$j \in \{1,2\}$}{
        $y_j \gets $ position of the leftmost median agent in $N_j$\;
    }
    $z_1 \gets y_1$\;
    $z_2 \gets $ nearest empty node to $y_2$, breaking ties in favor of the rightmost such empty node\;
}
\end{algorithm}

\begin{theorem} \label{thm:FMNE-sp}
FMNE is strategyproof.
\end{theorem}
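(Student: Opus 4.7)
My plan is to exploit two structural observations. First, since the position profile $\bx$ is common knowledge, the branch of FMNE selected (Fixed vs.\ Median-Nearest-Empty) is determined before any preference is reported, so no misreport $\bt_i'$ can switch branches. In the Fixed branch the output $(\lfloor n/2\rfloor,\lfloor n/2\rfloor+1)$ depends only on $\bx$, so strategyproofness there is immediate. Second, in the MNE branch, the location of facility $j$ depends only on $N_j$, i.e., on the $j$-th coordinate of the reported profile; combined with the fact that agent $i$'s true cost $\sum_{j} t_{ij}\,d(i,z_j)$ assigns weight zero to any facility she truly disapproves, the whole proof reduces to showing, separately for each $j\in\{1,2\}$, that an agent $i$ with $t_{ij}=1$ cannot bring $z_j$ strictly closer to $x_i$ by instead reporting $t_{ij}'=0$.

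For facility~$1$, where $z_1=y_1$ is the leftmost median agent of $N_1$, I would run the textbook median argument: a case split on the parity of $|N_1|$ and on whether $x_i$ lies to the left of, to the right of, or at $y_1$ shows that deleting $i$ from $N_1$ either leaves $y_1$ unchanged or shifts it one position away from $x_i$, ruling out any improvement. The main obstacle is facility~$2$: although the first step computes a median $y_2$ in the same way, the final output $z_2=f(y_2)$ snaps $y_2$ to the nearest empty node (ties broken right), and this rounding step could in principle let a manipulator ``jump'' the facility across her own position. I would control this with two structural facts about~$f$: (i)~$f$ is monotone non-decreasing in its argument, and (ii)~if $\ell$ and $r$ denote the nearest empty nodes strictly left and right of $y_2$, then $z_2=\ell$ forces $y_2<(\ell+r)/2$ while $z_2=r$ forces $y_2\ge(\ell+r)/2$. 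Fact~(i), together with the median observation ``$x_i<y_2\Rightarrow y_2'\ge y_2$'' and its mirror on the right, immediately closes every sub-case in which $z_2$ lies on the same side of $x_i$ as the shift of $y_2$.

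The genuinely delicate sub-case is $x_i<y_2$ with $z_2=\ell<x_i$, where $z_2'$ could be $\ell$, $r$, or an empty node further to the right, so a naive reading suggests the agent might benefit if the rounding jumps past her. This is where fact~(ii) earns its keep: $z_2=\ell$ gives $\ell+r>2y_2>2x_i$, hence $r-x_i>x_i-\ell$, so any $z_2'\ge r$ is strictly farther from $x_i$ than $\ell$ was, while $z_2'=\ell$ leaves her cost unchanged. The remaining boundary case $x_i=y_2$ (where the parity of $|N_2|$ dictates whether $y_2'$ moves left or right) is resolved by the same midpoint inequality applied on the relevant side. Stitching these sub-cases together yields $d(i,z_2')\ge d(i,z_2)$ for every $i\in N_2$, and together with the facility-$1$ analysis this proves strategyproofness.
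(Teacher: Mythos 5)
Your proposal is correct and follows essentially the same route as the paper: the Fixed branch is position-determined, the argument decomposes per facility since $z_j$ depends only on $N_j$ and disapproved facilities contribute zero cost, and the key step is that withdrawing approval pushes the median away from $x_i$ so the (monotone) snap-to-nearest-empty-node can only move facility $2$ farther. Your midpoint inequality for the rounding step just makes explicit what the paper asserts more tersely ("in any case, the cost of $i$ does not decrease"), so this is a slightly more rigorous rendering of the same argument rather than a different one.
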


\begin{proof}
Consider an arbitrary instance $I$. The mechanism is clearly strategyproof if there are no empty nodes in $I$ as the locations of the facilities are fixed and independent of the preferences of the agents. So, it remains to consider the case where $I$ contains empty nodes. Let $i$ be an arbitrary agent. We switch between the following three cases:

\medskip
\noindent
{\bf Agent $i$ approves only facility $1$ ($i \in N_1 \setminus N_2$).} 
Suppose without loss of generality that $x_i \leq y_1$. Any misreport of agent $i$ can only lead to a median $y_1'$ among the agents that approve facility $1$ which is farther away from $x_i$. In particular, if $i$ misreports that she approves only facility $2$, then $y_1' \geq y_1$, whereas if $i$ misreports that she approves both facilities, then $y_1' = y_1$.

\medskip
\noindent
{\bf Agent $i$ approves only facility $2$ ($i \in N_2 \setminus N_1$).} 
Suppose without loss of generality that facility $2$ is positioned at some empty node $e$ with $x_e > y_2$. Denote by $y_2'$ the median node occupied by the agents that approve facility $2$ when $i$ misreports. 
\begin{itemize}
\item 
If agent $i$ misreports that she approves both facilities, then $y_2' = y_2$, and hence the position of facility $2$ as well as the cost of agent $i$ remain the same.

\item
If $x_i \leq y_2$ and agent $i$ misreports that she only approves facility $1$, then $y_2' \geq y_2$. As a result, either $e$ continues to be the nearest empty node to $y_2'$ and the cost of $i$ remains exactly the same, or another empty node $e'$ with $x_{e'} > x_e$ becomes the nearest empty node to $y_2'$, and the cost of $i$ strictly increases. 

\item 
If $x_i > y_2$ and agent $i$ misreports that she only approves facility $1$, then $y_2' \leq y_2$. As a result, either $e$ continues to be the nearest empty node to $y_2'$, or another empty node $e$ with $x_{e'} < y_2 < x_e$ becomes the nearest empty node to $y_2'$. In any case, the cost of $i$ does not decrease.
\end{itemize}

\noindent
{\bf Agent $i$ approves both facilities ($i \in N_1 \cap N_2$).}
Since the cost of agent $i$ is the sum of costs she derives from the two facilities and we decide where to locate each facility independently from the other facility, the same arguments for the previous two cases show that no possible misreport can lead to a strictly lower cost. 
\end{proof}

To argue about the approximation ratio of FMNE, we will distinguish between instances with and without empty nodes. In our proofs, we exploit the following lower bounds on the optimal social cost; we include the proof for completeness.
Here, $\one{X}$ is equal to $1$ if the event $X$ is true, and $0$ otherwise.

\begin{lemma}[\citep{serafino2016}]\label{lem:opt-sc-bound}
For any instance $I$ in which there are $n_j$ agents that approve facility $j \in \{1,2\}$, it holds that
\begin{align*}
\SC^*(I) &\geq \frac{1}{4}\bigg(n_1^2 + n_2^2 -\one{n_1 \text{ odd}}-\one{n_2 \text{ odd}}\bigg) \\
&\geq \frac{1}{4}\bigg(n_1^2 + n_2^2 - 2\bigg).
\end{align*}
\end{lemma}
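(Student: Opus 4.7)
The plan is to exploit the fact that the social cost decomposes additively by facility. Since each agent's cost is the sum of her distances to the facilities she approves,
\begin{align*}
\SC(\bz|I) = \sum_{j \in \{1,2\}} \sum_{i \in N_j} |x_i - z_j|,
\end{align*}
and the two inner sums depend on disjoint coordinates of $\bz$. Dropping the feasibility constraint $z_1 \neq z_2$ can only decrease the minimum, so it is enough to prove, for each facility $j \in \{1,2\}$ and every possible $z_j$,
\begin{align*}
\sum_{i \in N_j} |x_i - z_j| \;\geq\; \frac{n_j^2 - \one{n_j \text{ odd}}}{4},
\end{align*}
and then sum the two bounds.

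For fixed $j$, I would sort the agents in $N_j$ by position as $p_1 < p_2 < \ldots < p_{n_j}$. Because agents occupy distinct nodes of the line, the $p_i$ are distinct integers, which forces $p_{n_j - k + 1} - p_k \geq n_j - 2k + 1$ for every $k \in \{1, \ldots, \lfloor n_j / 2 \rfloor\}$. Pairing the $k$-th leftmost agent with the $k$-th rightmost agent and applying the triangle inequality yields, for any $z_j$,
\begin{align*}
\sum_{i=1}^{n_j} |p_i - z_j|
\;\geq\; \sum_{k=1}^{\lfloor n_j/2 \rfloor} \bigl(|p_k - z_j| + |p_{n_j - k + 1} - z_j|\bigr)
\;\geq\; \sum_{k=1}^{\lfloor n_j/2 \rfloor} (n_j - 2k + 1).
\end{align*}
The last expression is an arithmetic progression whose sum equals $(n_j/2)^2 = n_j^2/4$ when $n_j$ is even, and $\lfloor n_j/2 \rfloor (\lfloor n_j/2 \rfloor + 1) = (n_j^2 - 1)/4$ when $n_j$ is odd. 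Both cases are captured uniformly by $(n_j^2 - \one{n_j \text{ odd}})/4$.

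Summing over $j \in \{1,2\}$ delivers the first stated inequality, and the second is immediate since each indicator is at most $1$. The only mild obstacle is the parity split in the arithmetic-progression evaluation; the rest of the argument reduces to the additive separability of the cost across facilities and the observation that distinct integer positions enforce a minimum pairwise spread.
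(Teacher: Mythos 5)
Your proof is correct and follows essentially the same route as the paper's: decompose the cost by facility (dropping the $z_1 \neq z_2$ constraint) and observe that $n_j$ distinct integer positions force a per-facility cost of at least $\frac{1}{4}(n_j^2 - \one{n_j \text{ odd}})$. Your pairing argument via $p_{n_j-k+1} - p_k \geq n_j - 2k + 1$ is simply a more rigorous rendering of the paper's informal ``best-case packing'' count, so no substantive difference.
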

\begin{proof}
We argue about each facility $j\in\{1,2\}$ independently. When $n_j$ is odd, the optimal allocation can, at best, have facility $j$ at one of these $n_j$ nodes and have two agents at distance $i$, for $i\in\{1, \dots, \frac{n_j-1}{2}\}$ from $j$; the total cost due to facility $j$ is then $\frac{n_j^2-1}{4}$. When $n_j$ is even, the optimal allocation can, at best, place facility $j$ facility at one of the $n_j$ nodes and have two agents at distance $i$, for $i\in\{1, \dots, \frac{n_j}{2}-1\}$, and an agent at distance $\frac{n_j}{2}$; the total cost due to facility $j$ in this case is then~$\frac{n_j^2}{4}$.
\end{proof}

For instances without empty nodes and $n \geq 5$, we will show that the approximation ratio of FMNE (in particular, its {\sc Fixed} part) is at most $3$; note that for $n \leq 4$, the {\sc TwoExtremes} mechanism of \citet{serafino2016} is $3$-approximate.

\begin{theorem} \label{thm:FMNE-Fixed-approx}
For any instance with $n \geq 5$ agents and no empty nodes, the $\SC$-approximation ratio of FMNE is at most $3$.
\end{theorem}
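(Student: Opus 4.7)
The plan is to compare $\SC = S_1 + S_2$, where $k = \lfloor n/2 \rfloor$, $z_1 = k$, $z_2 = k+1$, and $S_j = \sum_{i \in N_j} |i - z_j|$, against two lower bounds on $\SC^*$. Since there are no empty nodes, the agents occupy the consecutive positions $1, 2, \ldots, n$, and the (implicit) constraint that every agent approves at least one facility forces $N_1 \cup N_2 = \{1, \ldots, n\}$, so that $n_1 + n_2 \geq n$, where $n_j = |N_j|$.

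The two lower bounds I would use on $\SC^*$ are: (a) Lemma~\ref{lem:opt-sc-bound}, giving $\SC^* \geq (n_1^2 + n_2^2 - 2)/4$; and (b) the elementary bound $\SC^* \geq M_1 + M_2$, where $M_j = \min_z \sum_{i \in N_j} |i - z|$ is attained at a median of $N_j$. For the upper side, I would derive a closed-form bound on $S_j$ by observing that, for fixed $n_j$, $S_j$ is maximized when $N_j$ consists of the $n_j$ positions furthest from $z_j$, yielding a quadratic-in-$n_j$ expression read off from the distance multiset $\{|i - z_j| : i = 1, \ldots, n\}$ around $z_j$ (with a slight asymmetry between the two facilities when $n$ is odd).

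I would then split into cases depending on whether each of $n_1, n_2$ is large or small. When both are moderately large (say, both on the order of $\sqrt{n}$ or more), the Lemma's quadratic lower bound dominates and algebraic substitution of the closed-form $S_j$ bound yields $(S_1 + S_2)/\SC^* \leq 3$. When one of $n_j$ is small, the covering constraint $N_1 \cup N_2 = \{1,\ldots,n\}$ forces $n_{3-j} \geq n - n_j$, so $M_{3-j}$ is of order $n^2$ by (b), while the small-side contribution to $\SC$ is only of order $n$; ratio $\leq 3$ then follows.

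The main obstacle will be isolating and verifying the binding adversarial configuration. I expect this to be, up to mirror image, $N_1 = \{k+1, \ldots, n\}$ and $N_2 = \{1, \ldots, k\}$: a direct computation yields $\SC/\SC^* = 2(n+1)/(n-1)$ for odd $n$, and an analogous expression for even $n$, both of which are at most $3$ for $n \geq 5$, with equality exactly when $n \in \{5, 6\}$. The remaining care is in arguing that no other partition $(N_1, N_2)$ produces a larger ratio; this I would do by a swap/monotonicity argument showing that pulling an agent closer to the central pair $\{k, k+1\}$, or shifting its membership between $N_1$ and $N_2$, cannot increase $(S_1 + S_2)/(M_1 + M_2)$ beyond the extremal value exhibited above.
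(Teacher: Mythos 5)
Your high-level plan (a closed-form upper bound on the mechanism's cost in terms of $n_1,n_2$, compared against Lemma~\ref{lem:opt-sc-bound}) is in the right family, but as stated it has a gap that sits exactly where the theorem is tight. Bounding $S_1$ and $S_2$ \emph{independently}, each by the sum of the $n_j$ largest distances to $z_j$, is too weak in the balanced regime: for $n=6$, $n_1=n_2=3$, $b=0$, the distance multisets around $z_1=3$ and $z_2=4$ each have top-three sum $3+2+2=7$, so your bound gives $S_1+S_2\leq 14$, while Lemma~\ref{lem:opt-sc-bound} gives $\SC^*\geq 4$; that is a ratio of $3.5>3$. The two per-facility worst cases are not simultaneously realizable (they compete for the extreme positions), and neither of your two lower bounds repairs this: $M_1+M_2$ can equal $4$ here as well. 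So everything is forced onto your final ``swap/monotonicity'' step identifying the binding configuration $N_1=\{k+1,\dots,n\}$, $N_2=\{1,\dots,k\}$ --- which you correctly flag as the main obstacle but do not carry out. That step is genuinely delicate: the denominator $\SC^*$ (or even $M_1+M_2$) moves discretely with the medians under a swap, and since the bound $3$ is attained at $n\in\{5,6\}$ there is no slack to absorb a loose monotonicity claim. Likewise, the ``both on the order of $\sqrt{n}$'' versus ``one small'' dichotomy has no quantified threshold and cannot by itself certify the exact constant $3$.

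The paper avoids all of this by coupling the two facilities at the level of individual agents rather than bounding each $S_j$ separately: every agent pays at most her distance to the \emph{farther} of the two fixed locations $\lfloor n/2\rfloor,\lfloor n/2\rfloor+1$ (these distances sum to $2\sum_{i=1}^{n/2}i$ for even $n$), and only the $b=n_1+n_2-n$ doubly-approving agents pay an extra at most $n/2$ each. Substituting $n=n_1+n_2-b$ and using $b\geq 0$ yields $\SC(\bz|I)\leq\frac{1}{4}(n_1^2+n_2^2+2n_1n_2+2n_1+2n_2)$, and then Lemma~\ref{lem:opt-sc-bound} alone closes the argument via the inequality $(n_1-n_2)^2+n_1^2+n_2^2\geq 2n_1+2n_2+6$ for $n_1+n_2\geq 6$ (with a short separate computation for odd $n$). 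If you want to rescue your route, you must either prove the extremal-configuration claim rigorously or replace the independent bounds on $S_1,S_2$ with a joint, agent-indexed bound of this kind.
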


\begin{proof}
Consider an instance $I$ and recall that $N_j$ denotes the set of agents that approve facility $j$. Let $n_1=|N_1|$, $n_2=|N_2|$, and $b=|N_1 \cap N_2|$; clearly, it holds that $n=n_1+n_2-b$.

We first consider the case where $n$ is even, i.e., $n\geq 6$. For any agent $i$, with $i\in\{1, \dots, n\}$, the maximum distance of $i$ to a facility is $|n/2+1-i|$. Furthermore, each of the $b$ agents that approve both facilities faces an added cost of at most $n/2$ due to the distance to the agent's nearest facility. Therefore, the total cost of the solution $\bz$ computed by the mechanism is bounded by 
\begin{align*}
\SC(\bz|I)&\leq 2\sum^{n/2}_{i=1}{i}+b\cdot\frac{n}{2}\\
&= \frac{n}{2} \cdot \left(\frac{n}{2}+1\right) + b\cdot\frac{n}{2}\\
&= \frac{n_1^2+n_2^2+b^2+2n_1n_2-2bn_1-2bn_2+2n_1+2n_2-2b}{4}+\frac{bn_1+bn_2-b^2}{2}\\
&\leq \frac{n_1^2+n_2^2+2n_1n_2+2n_1+2n_2}{4},
\end{align*}
where the second equality holds since $n=n_1+n_2-b$, while the last inequality holds since $b\geq 0$.

By Lemma~\ref{lem:opt-sc-bound}, $\SC^*(I) \geq \frac{1}{4}\left(n_1^2+n_2^2-2\right)$, and thus the approximation ratio is bounded by 
\begin{align*}
\frac{\SC(\bz|I)}{\SC^*(I)}&\leq \frac{n_1^2+n_2^2+2n_1n_2+2n_1+2n_2}{n_1^2+n_2^2-2}.
\end{align*}
To prove the claim, it suffices to show that, when $n_1+n_2\geq 6$, it holds that $n_1^2+n_2^2+2n_1n_2+2n_1+2n_2\leq 3n_1^2+3n_2^2-6$, i.e., $(n_1-n_2)^2+n_1^2+n_2^2\geq 2n_1+2n_2+6$. Observe that, when $n_1+n_2\geq 6$, it holds that  $n_1^2+n_2^2\geq 3(n_1+n_2)\geq 2n_1+2n_2+6$; the claim follows.

We now consider the case where $n\geq 5$ is odd; the analysis is slightly more involved, but follows along similar lines. 
Observe that the maximum distance of any agent $i$ positioned at some of the first $(n-1)/2$ nodes from a facility (in particular, facility $2$) is $(n+1)/2-i$, while the maximum distance of any agent $i$ positioned at some of the last $(n+1)/2$ nodes from a facility (in this case, facility $1$) is $i-(n-1)/2$. Furthermore, each of the $b$ agents that approve both facilities faces an added cost of at most $(n-1)/2$. So, the total cost of the solution $\bz$ computed by the mechanism is bounded by 
\begin{align*}
\SC(\bz|I)&\leq \sum^{(n-1)/2}_{i=1}{i}+\sum^{(n+1)/2}_{i=1}{i}+b\cdot\frac{n-1}{2}\\
&= \frac{(n+1)^2}{4}+b\cdot\frac{n-1}{2}\\
&= \frac{n_1^2+n_2^2+b^2+2n_1n_2-2bn_1-2bn_2+2n_1+2n_2-2b+1}{4}+\frac{bn_1+bn_2-b^2-b}{2}\\
&= \frac{n_1^2+n_2^2+2n_1n_2+2n_1+2n_2+1-b^2-4b}{4},
\end{align*}
where, again, the second equality holds since $n=n_1+n_2-b$.

By Lemma~\ref{lem:opt-sc-bound}, $\SC^*(I) \geq \frac{1}{4}\left( n_1^2+n_2^2-\one{n_1 \text{ odd}}-\one{n_2 \text{ odd}}\right)$, and thus the approximation ratio is bounded by 
\begin{align*}
\frac{\SC(\bz|I)}{\SC^*(I)}&\leq \frac{n_1^2+n_2^2+2n_1n_2+2n_1+2n_2+1-b^2-4b}{n_1^2+n_2^2-\one{n_1 \text{ odd}}-\one{n_2 \text{ odd}}}.
\end{align*}
To prove the claim it suffices to show that $(n_1-n_2)^2+n_1^2+n_2^2+b^2+4b\geq 2n_1+2n_1+1+3(\one{n_1 \text{ odd}}+\one{n_2 \text{ odd}})$. If $b\geq 1$, then $n_1+n_2\geq 6$, and the claim follows since $(n_1-n_2)^2+n_1^2+n_2^2 \geq 2(n_1+n_2+1)$ holds in this case. Otherwise, when $b=0$, then exactly one of $n_1, n_2$ is odd and it suffices to show that $(n_1-n_2)^2+n_1^2+n_2^2\geq 2n_1+2n_1+4$.
Since $(n_1-n_2)^2\geq 1$, this always holds if $n_1+n_2\geq 5$.
\end{proof}

For instances with at least one empty node, we will show that the approximation ratio of FMNE (in particular, its {\sc Median-Nearest-Empty} part) is $17/4$ for any $n \geq 6$; observe that the {\sc TwoExtremes} mechanism of \citet{serafino2016} achieves an approximation ratio of at most $4$ when $n \leq 5$. 
Our proof for the approximation ratio of FMNE in this case relies on the following technical lemma.

\begin{lemma}\label{lem:technical}
Let $f(x,y) = \frac{y^2+4xy+2y+1}{x^2+y^2-2}$. For non-negative integers $x,y$ such that $x+y\geq 6$, it holds $f(x,y)\leq 13/4$. 
\end{lemma}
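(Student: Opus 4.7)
The plan is to clear denominators and reduce the claim to a single quadratic inequality in two integer variables, which I will then analyze by a standard discriminant-plus-boundary argument.

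First I would note that for $x+y\geq 6$ we have $x^2+y^2\geq (x+y)^2/2\geq 18$, so the denominator $x^2+y^2-2$ is strictly positive. Hence $f(x,y)\leq 13/4$ is equivalent to
\begin{align*}
g(x,y) \;:=\; 13x^2 + 9y^2 - 16xy - 8y - 30 \;\geq\; 0.
\end{align*}
The entire task is now to verify $g(x,y)\geq 0$ for all non-negative integers with $x+y\geq 6$.

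Next I would regard $g$ as a quadratic in $x$ with fixed $y$, namely $g_y(x)=13x^2-16yx+(9y^2-8y-30)$. Its discriminant in $x$ is
\begin{align*}
\Delta(y) = (16y)^2 - 4\cdot 13\cdot(9y^2-8y-30) = -212y^2 + 416y + 1560.
\end{align*}
A direct check shows $\Delta(4)=-168<0$ and $\Delta$ is a downward-opening parabola in $y$, so $\Delta(y)<0$ for every integer $y\geq 4$. Consequently, for any $y\geq 4$ the quadratic $g_y(x)$ is strictly positive for all real $x$, which handles all pairs with $y\geq 4$ at once.

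It then remains to treat the four residual cases $y\in\{0,1,2,3\}$, where the constraint forces $x\geq 6-y$. The vertex of $g_y$ lies at $x=8y/13\leq 24/13<2$, which is strictly smaller than $6-y\geq 3$ in each residual case, so $g_y$ is increasing on the relevant range and it suffices to verify $g_y(6-y)\geq 0$. The four computations are routine: $g_0(6)=438$, $g_1(5)=216$, $g_2(4)=70$, and $g_3(3)=0$. The tightness at $(x,y)=(3,3)$ matches the original inequality, since $f(3,3)=52/16=13/4$, which is a reassuring sanity check.

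The only mildly delicate step is identifying the correct threshold in $y$ beyond which the discriminant argument kicks in; I do not expect any genuine obstacle. Assembling the two cases ($y\geq 4$ from the discriminant, $y\leq 3$ from the four explicit values) gives $g(x,y)\geq 0$ throughout the stated domain and hence the claimed inequality $f(x,y)\leq 13/4$.
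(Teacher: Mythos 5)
Your proposal is correct: the equivalence $f(x,y)\leq 13/4 \iff 13x^2+9y^2-16xy-8y-30\geq 0$ is valid on the domain (the denominator is positive there, as you note), the discriminant computation $\Delta(y)=-212y^2+416y+1560$ is right and is indeed negative for all integers $y\geq 4$, the vertex location $x=8y/13<2\leq 6-y$ justifies checking only the boundary points for $y\leq 3$, and the four evaluations $g_0(6)=438$, $g_1(5)=216$, $g_2(4)=70$, $g_3(3)=0$ all check out. The route is genuinely different from the paper's: the paper first rewrites $f(x,y)=1+\frac{-x^2+4xy+2y+3}{x^2+y^2-2}$, screens for the pairs where $f>3$ (reducing to $-x^2+2y+7>2(x-y)^2$ after the substitution $y=x+k$), and then evaluates $f$ explicitly at the few surviving candidates $(2,4)$, $(3,3)$, $(3,4)$. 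Your version instead attacks the tight threshold $13/4$ directly by clearing denominators and treating the result as a quadratic in $x$, which yields a single uniform argument for all $y\geq 4$ and isolates $(3,3)$ as the unique equality case; the paper's two-stage screening trades that uniformity for slightly lighter arithmetic at each step. Both are elementary and complete; yours is arguably the cleaner write-up since it never needs the auxiliary threshold $3$ or the change of variables.
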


\begin{proof}
First, observe that $f(x,y)$ can be written as $f(x,y) = 1+\frac{-x^2+4xy+2y+3}{x^2+y^2-2}$. It suffices to limit our attention to the values of $x,y$ for which $f(x,y)>3$, i.e., to these $x,y$ such that $\frac{-x^2+4xy+2y+3}{x^2+y^2-2} > 2$. By rearranging, we obtain $-x^2+4xy+2y+3 > 2x^2+2y^2-4$, and, therefore, $-x^2+2y+7 > 2(x-y)^2$.

Let $y=x+k$ for some integer $k$ and rewrite the last inequality as $-x^2+2x+7> 2(k^2-k)$. Clearly, for $x\geq 4$ the inequality never holds as the left-hand-side term is negative and the right-hand-side term is always non-negative. Hence, we obtain that $f(x\geq 4, y)< 13/4$. For the remaining values of $x$, i.e., when $x\in \{0,1,2,3\}$, recall that $x+y\geq 6$, i.e., $2x+k\geq 6$. When $x=0$, it must be $k\geq 6$ and the inequality does not hold, as $7<60$. When $x=1$, we have $k\geq 4$ and, again, the inequality does not hold, as $8< 24$. For $x=2$, we obtain $k\geq 2$ and the inequality becomes $7>2(k^2-k)$, which holds only when $k=2$; in this case, $f(2,4) = 19/6<13/4$. Finally, for $x=3$, we have that $k\geq 0$ and the inequality becomes $4 > 2(k^2-k)$ which holds for $k\in \{0,1\}$. The proof follows by observing that $\max\{f(3,3), f(3,4)\} = \max\{13/4, 73/23\}\leq 13/4.$
\end{proof}

We are now ready to prove the bound for instances with empty nodes.

\begin{theorem}
For any instance with $n \geq 6$ and at least one empty node, the $\SC$-approximation ratio of FMNE is at most $17/4$. 
\end{theorem}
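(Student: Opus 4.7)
The plan is to decompose the mechanism's social cost and reduce the theorem to bounding the extra cost incurred by placing facility~$2$ at an empty node. Since the mechanism sets $z_1 = y_1$ (the median of $N_1$), the contribution from $N_1$ equals $\text{OPT}_1 := \sum_{i \in N_1}|x_i - y_1|$ exactly. Writing $\text{OPT}_2 := \sum_{i \in N_2}|x_i - y_2|$ and $C_2 := \sum_{i \in N_2}|x_i - z_2|$, the mechanism's cost is $\text{OPT}_1 + C_2$. For any feasible solution the cost from $N_j$ is at least $\text{OPT}_j$, hence $\SC^*(I) \geq \text{OPT}_1 + \text{OPT}_2$, which gives
\[
\SC(\bz|I) - \SC^*(I) \leq C_2 - \text{OPT}_2.
\]
The theorem thus reduces to the inequality $4(C_2 - \text{OPT}_2) \leq n_2^2 + 4 n_1 n_2 + 2 n_2 + 1$, because combining it with Lemma~\ref{lem:opt-sc-bound} and Lemma~\ref{lem:technical} (applicable since $n_1 + n_2 \geq n \geq 6$) yields $\SC(\bz|I)/\SC^*(I) \leq 1 + f(n_1, n_2) \leq 1 + 13/4 = 17/4$.

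To prove this reduced inequality, I would assume without loss of generality that $z_2 = y_2 + d$ with $d \geq 1$ (the case $z_2 < y_2$ is symmetric), and let $k$ denote the number of $N_2$ agents strictly between $y_2$ and $z_2$. Splitting the sum $C_2 - \text{OPT}_2 = \sum_{i \in N_2}(|x_i - z_2| - |x_i - y_2|)$ by whether an agent lies at $\leq y_2$, in $(y_2, z_2)$, or $> z_2$, a direct computation produces
\[
C_2 - \text{OPT}_2 \leq \bigl(\one{n_2 \text{ odd}} + 2k\bigr)\,d - k(k+1),
\]
where the indicator comes from $\lceil n_2/2 \rceil - \lfloor n_2/2 \rfloor$, and $-k(k+1)$ uses that the $k$ middle agents have distinct positive integer displacements from $y_2$ and hence sum to at least $k(k+1)/2$. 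Next I would bound $d$: since $z_2$ is the nearest empty node, every position in $[y_2 - d + 1, y_2 + d - 1] \cap [1,m]$ is occupied, and at most $n_1$ of those positions can host non-$N_2$ agents. In the sub-case where the interval is clipped at the left end (i.e.\ $y_2 < d$), this counting forces $d \leq n_1 + k + 1$; in the unclipped sub-case, the resulting bound on $d$ turns out to be tighter and leads to the same (or smaller) upper bound on $C_2 - \text{OPT}_2$. Substituting $d \leq n_1 + k + 1$ and maximizing the monotone expression in $k \in \{0, \ldots, \lfloor n_2/2 \rfloor\}$ yields exactly $n_1 n_2 + (n_2+1)^2/4$ in the odd-$n_2$ case, which equals $\tfrac{1}{4}(n_2^2 + 4 n_1 n_2 + 2 n_2 + 1)$; the even-$n_2$ case is analogous.

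The main obstacle is that the naive estimates $C_2 - \text{OPT}_2 \leq n_2 d$ and $d \leq (n+1)/2$ are simultaneously too loose in exactly the regime that attains ratio $17/4$, namely a configuration with three $N_2$ agents packed at one end of the line and the unique empty node at the other. Both refinements --- the $k(k+1)$ savings from the middle block's integer-displacement structure, and the $n_1$-driven bound on $d$ coming from the requirement that non-$N_2$ agents fill the gap between $y_2$ and its nearest empty neighbor --- must be applied in tandem, and Lemma~\ref{lem:technical} is tailored precisely so that the resulting quadratic expression in $n_1, n_2$ collapses to the constant $13/4$ that produces $17/4$ after adding $1$.
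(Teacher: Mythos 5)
Your proposal is correct and follows essentially the same route as the paper's proof: the same decomposition $\SC(\bz|I)=\text{OPT}_1+C_2$ with $\SC^*(I)\geq \text{OPT}_1+\text{OPT}_2$, the same three-way split of $N_2$ around $y_2$ and $z_2$ (your $k$ is the paper's $|M|$, and your displacement computation giving $(\one{n_2 \text{ odd}}+2k)d-k(k+1)$ is the paper's pairing argument in different notation), the same occupancy bound $d\leq n_1+k+1$, and the same final combination with Lemmas~\ref{lem:opt-sc-bound} and~\ref{lem:technical}. The only cosmetic difference is that you consider the two-sided occupied interval around $y_2$ before specializing to the one-sided count, whereas the paper counts only the occupied nodes strictly between $y_2$ and the chosen empty node.
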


\begin{proof}
Consider any instance $I$. We first argue a bit about the optimal social cost of $I$. A solution that minimizes the social cost locates each facility $j \in \{1,2\}$ to the node $y_j$ occupied by a median agent in $N_j$. However, this solution might not be feasible if $y_1=y_2$, and so the optimal social cost can only be larger. We have that
\begin{align}\label{eq:opt-bound}
\SC^*(I) \geq \sum_{i \in N_1} d(x_i, y_1) + \sum_{i \in N_2} d(x_i, y_2). 
\end{align}

Now, let us focus on the social cost of the solution $\bz$ computed by the mechanism. Let $e$ be the empty node where facility $2$ is located; without loss of generality, we can assume that $x_e > y_2$. Combined with the fact that facility $1$ is located at $y_1$, we have that $\bz = (y_1,x_e)$, and 
\begin{align*}
\SC(\bz | I) = \sum_{i \in N_1} d(x_i,y_1) + \sum_{i \in N_2} d(x_i,x_e).
\end{align*}
The first term appears in the lower bound of the optimal social cost given by Inequality \eqref{eq:opt-bound}, so all we need to do is bound the second term of the above expression. 

We partition the set $N_2$ into three sets $L$, $M$ and $R$ depending on the positions of the agents in $N_2$ compared to $y_2$ and $x_e$, as follows:
\begin{itemize}
\item $L = \{i \in N_2: x_i \leq y_2\}$;
\item $M = \{i \in N_2: x_i \in (y_2, x_e) \}$;
\item $R = \{i \in N_2: x_i > x_e\}$.
\end{itemize}
By the definition of median, we have that $|L| \geq |M|+|R|$; in particular, this is an equality if $n_2 = |N_2|$ is even, and a strict inequality if $n_2$ is odd (as $L$ also includes the median agent in this case). 
Observe that
\begin{itemize}
\item For every agent $i \in M$, there exists a unique agent in $j \in L$ such that 
$$d(x_j,x_e) = d(x_j, x_i) + d(x_i,x_e) = d(x_j,y_2) + d(x_i,y_2) + d(x_i,x_e).$$
\item For every agent $i \in R$, there exists a unique agent $j \in L$ such that
$$d(x_j,x_e) + d(x_i,x_e) = d(x_j,y_2) + d(y_2,x_e) + d(x_i,x_e) = d(x_j,y_2) + d(x_i,y_2).$$
\end{itemize}
Hence, we have that
\begin{align*}
\sum_{i \in N_2} d(x_i,x_e) 
&= \sum_{i \in L} d(x_i,x_e)  + \sum_{i \in M} d(x_i,x_e)  + \sum_{i \in R} d(x_i,x_e) \\
&\leq \sum_{i \in N_2} d(x_i,y_2) + d(y_2,x_e)\one{n_2 \text{ odd}} + 2\cdot \sum_{i \in M} d(x_i,x_e).
\end{align*}

Next, we will bound the second and third terms of the above expression. Since each agent occupies a different node, we can upper-bound the total distance of the agents in $M$ as follows:
\begin{align*}
&d(y_2,x_e)\one{n_2 \text{ odd}} + 2\cdot \sum_{i \in M} d(x_i,x_e) \\
&\leq d(y_2,x_e)\one{n_2 \text{ odd}}
+ 2\cdot \bigg( d(y_2,x_e)-1 + d(y_2,x_e)-2 + \ldots + d(y_2,x_e)-|M| \bigg) \\
&= - |M|^2 + (2d(y_2,x_e)-1) |M| + d(y_2,x_e)\one{n_2 \text{ odd}}.
\end{align*}
Now observe that $d(y_2,x_e) > |M|$ (since all agents in $M$ are between $y_2$ and $e$); thus, the last expression in the above derivation is an increasing function in terms of $|M|$. It is clearly also an increasing function in terms of $d(y_2,x_e)$. Since 
$|M| \leq \frac{1}{2}(n_2-\one{n_2 \text{ odd}})$ and 
$d(y_2,x_e) \leq n_1 + 1+|M| \leq n_1 + 1+ \frac{1}{2}(n_2-\one{n_2 \text{ odd}})$, 
by doing calculations and also using the fact that $\one{n_2 \text{ odd}} \leq 1$,
we obtain
\begin{align*}
d(y_2,x_e)\one{n_2 \text{ odd}} + 2\cdot \sum_{i \in M} d(x_i,x_e) 
&\leq \frac14 \bigg( n_2^2 + 4n_1n_2 + 2n_2+1 \bigg).
\end{align*}
By putting everything together, we have
\begin{align*}
\SC(\bz | I) 
&\leq \sum_{i \in N_1} d(x_i, y_1) + \sum_{i \in N_2} d(x_i, y_2) + \frac14 \bigg( n_2^2 + 4n_1n_2 + 2n_2+1 \bigg) \\
&\leq \SC^*(I) + \frac14 \bigg( n_2^2 + 4n_1n_2 + 2n_2+1 \bigg).
\end{align*}
By Lemma~\ref{lem:opt-sc-bound}, we have $\SC^*(I)\geq \frac{1}{4}\left(n_1^2 + n_2^2 -2 \right)$, and thus the approximation ratio is bounded by
\begin{align*}
\frac{\SC(\bz | I) }{\SC^*(I)} \leq 1 + \frac{n_2^2 + 4n_1n_2 + 2n_2+1}{n_1^2 + n_2^2 -2}.
\end{align*}
The bound of $17/4$ follows by applying Lemma \ref{lem:technical} with $x=n_1$ and $y=n_2$.
\end{proof}

We conclude this section by showing that our analysis of the approximation ratio of FMNE is tight.

\begin{lemma}
There exists an instance with $n \geq 5$ and no empty nodes such that the $\SC$-approximation ratio of FMNE is at least $3$, and an instance with $n \geq 6$ and at least one empty node such that the $\SC$-approximation ratio of FMNE is at least $17/4$.
\end{lemma}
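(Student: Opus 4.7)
The plan is to exhibit two explicit instances, one per claimed bound, and verify the ratios by direct calculation. Both instances will share the same adversarial structure: all agents that approve only facility~$2$ sit on the left of the line and all agents that approve only facility~$1$ sit on the right, so that $N_1$ and $N_2$ are perfectly separated. This setup will force FMNE either to lock the two facilities at the central nodes (when there are no empty nodes) or to push one facility out to a distant empty node, while the optimum can simply place each facility at the median of its approvers.

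For the no-empty-nodes bound, I would take $n=6$ agents on $6$ nodes, with agents at positions $1,2,3$ approving only facility~$2$ and agents at positions $4,5,6$ approving only facility~$1$. Since there are no empty nodes, the Fixed part of FMNE will output $(z_1,z_2)=(3,4)$, leading to a social cost of $(3+2+1)+(1+2+3)=12$. The feasible solution $(5,2)$, placing each facility at the median of its approvers, will achieve cost $(1+0+1)+(1+0+1)=4$; optimality of this value will follow from \Cref{lem:opt-sc-bound} with $n_1=n_2=3$, which gives $\SC^*(I) \geq \tfrac{1}{4}(9+9-1-1)=4$. Hence the ratio is exactly $3$.

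For the bound with an empty node, I would modify the above construction by extending the line to $7$ nodes, leaving position $7$ empty and keeping the same preferences on the six agents. Now the Median-Nearest-Empty branch will fire: the leftmost medians are $y_1=5$ and $y_2=2$, and position $7$ is the unique empty node, so FMNE will output $(z_1,z_2)=(5,7)$, incurring cost $(1+0+1)+(6+5+4)=17$. The optimum remains $(5,2)$ with cost $4$ (still tight by \Cref{lem:opt-sc-bound}), yielding a ratio of exactly $17/4$. No real obstacle is expected, as the argument reduces to direct verification; the only point to be careful about is invoking \Cref{lem:opt-sc-bound} to confirm that the value $4$ is indeed optimal in both instances.
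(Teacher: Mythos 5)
Your proposal is correct and follows essentially the same approach as the paper: exhibit explicit separated instances and verify the ratios directly (your second instance is identical to the paper's, and your first uses $n=6$ with a $3{+}3$ split where the paper uses $n=5$ with a $2{+}3$ split, both yielding ratio $3$). All calculations check out, including the invocation of \Cref{lem:opt-sc-bound} to certify optimality of the value $4$.
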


\begin{proof}
For the {\sc Fixed} part of FMNE consider the following instance $I_1$ with $5$ agents and no empty nodes. The first two agents approve only facility $2$, and the last three agents approve only facility $1$. The mechanism outputs the solution $(2,3)$, that is, it locates facility $1$ at the second node and facility $2$ at the third node. The social cost of this solution is $\SC((2,3)|I_1) = 9$. However, an optimal solution is $(4,2)$ with social cost $\SC^*(I_1) = 3$, leading to an approximation ratio of $3$.

For the {\sc Median-Nearest-Empty} part of FMNE consider the following instance $I_2$ with $6$ agents and one empty node. The first three nodes are occupied by agents that approve only facility $2$, the next three nodes are occupied by agents that approve only facility $1$, and the last node is empty. The mechanism outputs the solution $(5,7)$, that is it locates facility $1$ at node $5$ and facility $2$ at the empty node. This solution has social cost $\SC((5,7)|I_2) = 17$. In contrast, an optimal solution is $(5,2)$ with $\SC^*(I_2)=4$, leading to an approximation ratio of $17/4$.
\end{proof}


\section{A tight bound for instances with three agents}
In this section, we restrict to instances with three agents (and possibly many empty nodes). We show a tight bound of $4/3$ on the approximation ratio of strategyproof mechanisms. In particular, we present a rather simple instance without empty nodes showing that the approximation ratio of any strategyproof mechanism is at least $4/3$; this improves upon the previous lower bound of $9/8$ shown by \citet{serafino2016}. We complement this result by designing a mechanism that achieves the bound of $4/3$ when given as input any instance with three agents.

\begin{theorem} \label{thm:sc-lower}
The $\SC$-approximation ratio of any strategyproof mechanism is at least $4/3$.
\end{theorem}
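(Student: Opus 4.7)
The plan is to exhibit two instances $I_1, I_2$ with three agents on a three-node line (no empty nodes) that differ in the preference of a single distinguished agent $k$, and then argue by contradiction that no strategyproof mechanism can have approximation ratio strictly below $4/3$ on both. A line with three nodes admits only six feasible facility placements (ordered pairs of distinct nodes from $\{1,2,3\}$), so for any preference profile the social cost of each placement is a small, direct computation.

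First, I would choose $I_1, I_2$ so that (i) the optimal social cost in each is small---$\OPT(I_j) = 3$ is a natural target---and (ii) the set $G_j$ of feasible placements with $\SC(\cdot | I_j) < \tfrac{4}{3}\OPT(I_j)$ is highly restrictive, ideally a single placement or a short list per instance. With at most $4^3$ preference profiles to inspect on three agents and six placements to cost, this selection is a finite search completed by hand.

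Next, I would apply strategyproofness to the distinguished agent $k$. Because the two instances agree on the position profile and on the other two agents' preferences, if $k$'s true preference is as in $I_1$ she could mimic the preference of $I_2$ (and vice versa); strategyproofness then forces two inequalities comparing the total distance from $x_k$ to her truly-approved facilities under $M(I_1)$ and under $M(I_2)$. Writing these out in terms of the coordinates of the mechanism's outputs yields explicit distance constraints linking the two outputs.

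Finally, the contradiction is obtained by a short case analysis over the small product $G_1 \times G_2$: for every candidate pair, at least one of the two strategyproofness inequalities fails. The crux of the proof, and the main obstacle, is the instance selection in the first step---the preferences must be chosen so that $G_1, G_2$ are both tight and the ``good'' placements point the two facilities in directions that are mutually incompatible with strategyproofness for agent $k$. Because the underlying search space is finite and small, this is a careful combinatorial choice rather than a deep technical hurdle, but it is the heart of the argument and tightens the prior lower bound of $9/8$ of \citet{serafino2016}.
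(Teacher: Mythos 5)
Your high-level plan coincides with the paper's proof: two three-agent instances on a three-node line, differing only in one agent's approval set, with strategyproofness of that agent pinning the mechanism to a solution of social cost $4/3$ times the optimum. However, the proposal never exhibits the instances, and that choice is essentially the entire content of the proof; ``a finite search completed by hand'' is a promissory note, not an argument. For the record, the paper's choice is: in $I_1$ all three agents approve both facilities, and $I_2$ changes agent $3$ to approve only facility $2$. A single strategyproofness inequality (agent $3$, truly approving only facility $2$ in $I_2$, must not gain by reporting ``both'' and landing in $I_1$, where she would get cost $0$) forces facility $2$ to stay at node $3$ in $I_2$, and then both remaining feasible solutions $(1,3)$ and $(2,3)$ cost $4$ against the optimum $3$ of $(1,2)$.

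Beyond the missing instances, one structural assumption in your template does not match how this construction works and could send your search astray. You ask that \emph{both} instances have a highly restrictive set $G_j$ of sub-$4/3$ placements with $\OPT(I_j)=3$. In the paper's $I_1$ the optimum is $5$, the worst placement costs $6$, and hence $G_1$ is \emph{all} six placements: no approximation constraint is extracted from $I_1$ at all. What does the work there is a pigeonhole observation (any two distinct nodes among three include node $1$ or node $3$) combined with the full symmetry of $I_1$, which lets one assume without loss of generality that facility $2$ is placed at node $3$. Only $I_2$ contributes an approximation bound, and only one direction of the strategyproofness comparison is needed. So before your case analysis over $G_1\times G_2$ could close, you would have to relax the template to allow one instance to constrain the mechanism by symmetry rather than by its good set; as written, the search you describe presupposes a shape of construction that the known $4/3$ argument does not have.
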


\begin{proof}
We consider two instances with three agents and no empty nodes. In the first instance $I_1$, all agents approve both facilities. Clearly, any mechanism must locate a facility to the first or the third node (or, perhaps, both). Without loss of generality, suppose the mechanism locates facility $2$ at the third node. 

In the second instance $I_2$, the first two agents approve both facilities, while the third agent approves only facility $2$ (that is, the only difference between $I_1$ and $I_2$ is the preference of the third agent). Since facility $2$ is located at the third node in $I_1$, the same must happen in $I_2$; otherwise, agent $3$ would have cost at least $1$ in $I_2$ and incentive to misreport that she approves both facilities, thus changing $I_2$ to $I_1$, and decreasing her cost to $0$. However, both possible feasible solutions $z_1 = (1,3)$ and $z_2 = (2,3)$ have social cost $4$ in $I_2$, whereas an optimal solution (such as $z^* = (1,2)$) has social cost $3$; the theorem follows.
\end{proof}

Next, we design the $3$-agent mechanism {\sc Priority-Dictatorship}, which is strategyproof and has an approximation ratio of at most $4/3$. Consider any instance with three agents; for convenience, we call the agents $\ell$, $c$, and $r$ and let $x_\ell<x_c<x_r$.  Without loss of generality, we assume that $c$ is closer to $r$ than to $\ell$, that is, $x_r - x_c \leq x_c - x_\ell$. Our mechanism gives priority to the central agent over the right agent, and does not take into account the preference of the left agent at all. In particular, the mechanism locates at $x_c$ one of the facilities that agent $c$ approves, and decides the location of the other facility based on the preference of agent $r$. See Mechanism~\ref{mech:priority-dictatorship} for a formal description. We first show that the mechanism is strategyproof.

\begin{algorithm}[t]
\caption{\sc Priority-Dictatorship}
\label{mech:priority-dictatorship}
{\bf Input:} Instance $I$ with three agents $\ell$, $c$, and $r$ such that $x_\ell < x_c < x_r$ \;
{\bf Output:} Feasible solution $\bz$ \;
\uIf {$c \in N_1 \setminus N_2$}{
    \uIf {$r \in N_2$}{
        $\bz \gets (x_c,x_r)$\;
    }
    \Else{
        $\bz \gets (x_c,x_\ell)$\;
    }
}
\uElseIf{$c \in N_2 \setminus N_1$}{
    \uIf {$r \in N_1$}{
        $\bz \gets (x_r,x_c)$\;
    }
    \Else{
        $\bz \gets (x_\ell,x_c)$\;
    }
}
\Else{
    \uIf{$r \in N_2$}{
        $\bz \gets (x_c,x_c+1)$\;
    }
    \Else{
        $\bz \gets (x_c+1,x_c)$\;
    }
}
\end{algorithm}

\begin{theorem}
{\sc Priority-Dictatorship} is strategyproof.
\end{theorem}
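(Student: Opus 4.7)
The plan is to verify strategyproofness by a case analysis on the would-be deviator, exploiting the structural simplicity of the mechanism.

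First, I would observe that agent $\ell$'s preference $\bt_\ell$ does not appear in any branch of the mechanism, so $\ell$'s misreport never changes the output and trivially cannot help her.

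Next, I would handle agent $c$. The key structural observation is that whenever $c$ truthfully approves a facility $j$, the mechanism sets $z_j = x_c$, so $c$ pays zero for each approved facility. Thus her truthful cost is $0$ when she approves exactly one facility, and $|x_c-x_c|+|x_c-(x_c+1)| = 1$ when she approves both. I would then check that no alternative report of $c$ can yield strictly smaller cost. The only non-trivial subcase is $c \in N_1\cap N_2$: any one of $c$'s three possible misreports sends at least one truly-approved facility to $x_\ell$, $x_r$, or an adjacent node, and using that distinct agents occupy distinct nodes (so $x_r-x_c\geq 1$ and $x_c-x_\ell\geq 1$), her cost remains at least $1$.

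The substantive work is for agent $r$, which I would split on the value of $\bt_c$ (since $\bt_\ell$ is ignored and $\bt_c$ is fixed under an $r$-deviation). If $c \in N_1\setminus N_2$, the only branch on $\bt_r$ places facility $2$ at $x_r$ when $r\in N_2$ and at $x_\ell$ otherwise; a short check shows that a truthful $r\in N_2$ strictly loses by claiming $r\notin N_2$ (her cost from facility $2$ jumps from $0$ to $x_r-x_\ell$, or the added term $x_r-x_\ell$ if she also approves facility $1$), while a truthful $r\notin N_2$ is indifferent between the two outputs. The case $c \in N_2\setminus N_1$ is entirely symmetric. Finally, if $c \in N_1\cap N_2$, the two candidate outputs $(x_c,x_c+1)$ and $(x_c+1,x_c)$ yield identical cost to $r$ when she truly approves both facilities or neither; when she truly approves exactly one facility, the truthful report places that facility one step closer to $x_r$, so misreporting strictly increases her cost.

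The main (mild) obstacle is bookkeeping: each sub-case must evaluate $r$'s cost under her \emph{true} preference at both the truthful and the deviating outputs, and there are several combinations of $(\bt_c,\bt_r)$ to cover. I would organize the verification as a small table indexed by $(\bt_c,\bt_r)$ to ensure no case is missed, invoking $x_\ell<x_c<x_r$ and $x_r-x_c\leq x_c-x_\ell$ only where needed.
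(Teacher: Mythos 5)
Your proposal is correct and follows essentially the same route as the paper's proof: a per-agent case analysis observing that $\ell$ is ignored, that $c$ always receives her minimum feasible cost ($0$ if she approves one facility, $1$ if she approves both, since the two facilities must occupy distinct nodes), and that for each fixed report of $c$ the facility approved by $r$ is placed either independently of $r$'s report or strictly closer to $x_r$ under truth-telling. Your write-up is somewhat more explicit in enumerating $r$'s sub-cases, but the underlying argument is identical.
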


\begin{proof}
Consider any instance with three agents $\ell$, $c$ and $r$ such that $x_\ell < x_c < x_r$. Since the preference of agent $\ell$ is not taken into account, $\ell$ cannot affect the outcome and thus has no incentive to misreport. In addition, the mechanism always locates at $x_c$ one of the facilities that $c$ approves, and if $c$ approves both facilities, the other facility is located at $x_c+1$; hence, the cost of $c$ is always minimized. Finally, to see why agent $r$ also has no incentive to misreport, it suffices to observe that in all cases (where the preference of agent $c$ is fixed) the location of the facility that $r$ approves is either independent of her preference, or is closer to her position than if she misreports. As an example, if $c \in N_1\setminus N_2$, facility $1$ is located at $x_c$ independently from the report of $r$, and facility $2$ is located at $x_r$ if $r$ approves it. Hence, agent $r$ minimizes her cost by being truthful. The same holds for the remaining two cases.
\end{proof}

Next, we show the upper bound of $4/3$ on the approximation ratio of the mechanism.

\begin{theorem}
For instances with three agents, the $\SC$-approximation ratio of {\sc Priority-Dictatorship} is at most $4/3$.
\end{theorem}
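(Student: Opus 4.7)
The plan is to perform a case analysis on $\bt_c$, which determines the first branch of the mechanism, and within each case to enumerate the subcases determined by $\bt_r$ and $\bt_\ell$. Let $d_L = x_c - x_\ell$ and $d_R = x_r - x_c$; by the normalization $x_r - x_c \leq x_c - x_\ell$ and the integrality of node positions, $1 \leq d_R \leq d_L$. In each subcase I would compute $\SC(\ALG|I)$ in closed form and either verify directly that the mechanism's output is already optimal or lower bound $\SC^*(I)$ by exhibiting an explicit near-optimal feasible solution.

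When $c$ approves exactly one facility ($c \in N_1 \setminus N_2$ or the symmetric case $c \in N_2 \setminus N_1$), the mechanism places the approved facility at $x_c$ (so $c$'s cost is $0$) and the other facility at $x_r$ or $x_\ell$ depending on $\bt_r$. A subcase check over the nine combinations of $\bt_r$ and $\bt_\ell$ shows that the resulting solution is always optimal, yielding ratio exactly $1$; the key fact is that on a three-agent line one of $x_\ell, x_c, x_r$ is already a median of every approver set, so any facility the mechanism places at such a point matches an optimal placement.

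The substantive case is $c \in N_1 \cap N_2$, where the mechanism outputs either $(x_c, x_c+1)$ or $(x_c+1, x_c)$ based on $\bt_r$. Here $c$'s cost equals $1$, which is the minimum value of $d(c, z_1) + d(c, z_2)$ over all feasible $\bz$ since $z_1 \neq z_2$ is required. Enumerating the six subcases determined by $(\bt_r, \bt_\ell)$, I would establish that in each one either $\SC(\ALG|I) = \SC^*(I)$ (giving ratio $1$), or $\SC(\ALG|I) - \SC^*(I) = 1$ and $\SC^*(I) \geq 3$, which together yield ratio at most $4/3$.

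The main obstacle lies in the critical subcases that mirror the lower-bound construction in Theorem~\ref{thm:sc-lower}. Consider, e.g., $\bt_c = \bt_r = (1,1)$ and $\bt_\ell = (0,1)$: the optimum swaps the assignment so that facility $2$ (the one $\ell$ approves) sits at $x_c$ instead of $x_c+1$, reducing $\ell$'s cost by $1$ while leaving the total costs of $c$ and $r$ unchanged, since both approve both facilities and therefore only care about $d(\cdot,z_1) + d(\cdot,z_2)$. Writing this swapped solution down explicitly gives $\SC^*(I) \leq d_L + 2d_R$; combined with $\SC(\ALG|I) = d_L + 2d_R + 1$ and $d_L + 2d_R \geq 3$ (from $d_L, d_R \geq 1$), the ratio is at most $4/3$, tight at $d_L = d_R = 1$ as in the lower-bound instance.
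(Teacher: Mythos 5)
Your proposal follows essentially the same route as the paper: a case analysis on $c$'s preference, observing that the mechanism's output is optimal whenever $c$ approves a single facility, and in the remaining case $c \in N_1 \cap N_2$ reducing to a few non-trivial subcases where $\SC(\ALG|I) = \SC^*(I)+1$ and $\SC^*(I)\geq 3$. One small caution: to bound the ratio you need the matching \emph{lower} bound $\SC^*(I) \geq d_L + 2d_R$ (your stated inequality $\SC^*(I)\leq d_L+2d_R$ by itself only lower-bounds the ratio); the paper likewise asserts optimality of the exhibited solution without proof, and the required lower bound is easily checked by pairing agents that share an approved facility.
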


\begin{proof}
Consider any instance $I$ with three agents $\ell$, $c$ and $r$. We distinguish between the three cases considered by the mechanism.
\begin{itemize}
\item 
If $c \in N_1 \setminus N_2$, $c$ is a median agent for facility $1$.
\begin{itemize}
\item If $r \in N_2$, the outcome is $(x_c,x_r)$, and the approximation ratio is $1$ as $r$ is a median agent for facility $2$. 
\item If $r \in N_1 \setminus N_2$, the outcome is $(x_c,x_\ell)$, and the approximation ratio is again $1$ as either $\ell$ is a median agent for facility $2$, or all agents approve only facility $1$.  
\end{itemize}

\item
If $c \in N_2 \setminus N_1$, due to symmetry to the above case, the approximation ratio is again $1$.

\item 
If $c \in N_1 \cap N_2$, $c$ is a median agent for both facilities. The approximation ratio is $1$ in the following cases: 
(a) $c$ is the {\em unique} median for both facilities, which happens when $\ell$ and $r$ approve the same set of facilities; 
(b) $r$ is a median for the facility located at $x_c+1$ (so that this facility is located in-between median agents), which happens when $\ell$ and $r$ approve a single (different) facility, or when $\ell \in N_1 \setminus N_2$ and $r \in N_1 \cap N_2$. 
So, we can consider the remaining three cases. Let $\alpha = x_c-x_\ell \geq 1$ and $\beta = x_r - x_c \geq 1$. 
\begin{itemize}
\item $\ell \in N_1 \cap N_2$, $c \in N_1 \cap N_2$, $r \in N_2 \setminus N_1$. 
One possible optimal solution is $(x_\ell,x_c)$ with social cost $2\alpha+\beta$. The solution $(x_c,x_c+1)$ computed by the mechanism has social cost $2\alpha+\beta+1$. Hence, the approximation ratio is $\frac{2\alpha+\beta+1}{2\alpha+\beta} = 1 + \frac{1}{2\alpha+\beta}$. As this is a non-increasing function in terms of $\alpha$ and $\beta$, it attains its maximum value of $4/3$ for $\alpha=\beta=1$.

\item $\ell \in N_2 \setminus N_1$, $c \in N_1 \cap N_2$, $r \in N_1 \cap N_2$.
One possible optimal solution is $(x_r,x_c)$ with social cost $\alpha+2\beta$. The solution $(x_c,x_c+1)$ computed by the mechanism has social cost $\alpha+2\beta+1$. Hence, the approximation ration is $\frac{\alpha+2\beta+1}{\alpha+2\beta} = 1 + \frac{1}{\alpha+2\beta}$. This is again a non-increasing function in terms of $\alpha$ and $\beta$, and thus attains its maximum value of $4/3$ for $\alpha=\beta=1$.

\item $\ell \in N_1 \cap N_2$, $c \in N_1 \cap N_2$, $r \in N_1 \setminus N_2$.
One possible optimal solution is $(x_c,x_\ell)$ with social cost $2\alpha+\beta$. The solution $(x_c+1,x_c)$ computed by the mechanism has social cost $2\alpha+\beta+1$. Hence, the approximation ration is $\frac{2\alpha+\beta+1}{2\alpha+\beta} = 1 + \frac{1}{2\alpha+\beta}$, which is maximized once again to $4/3$ for $\alpha=\beta=1$.
\end{itemize}
\end{itemize}
The proof is now complete.
\end{proof}


\section{Maximum cost}
We now turn our attention to the maximum cost. For this objective, \citet{serafino2016} showed an upper bound of $3$ on the approximation ratio of the TwoExtemes mechanism, and a lower bound of $3/2$ on the approximation ratio of any strategyproof mechanism. We improve both bounds, by showing a tight bound of $2$. 

\subsection{Improving the upper bound}
To achieve the improved upper bound of $2$, we consider a class of mechanisms that use only the part of the line that is occupied, from the first to the last occupied node, with possible empty nodes in-between; with some abuse of notation, we denote by $m$ the size of exactly this part of the line. These mechanisms, termed {\sc $\alpha$-Left-Right}, are parameterized by an integer $\alpha \in \{1,\ldots,m-1\}$, and their general idea is as follows: They partition the line into two parts depending on the value of $\alpha$, and then decide where to locate the facilities based on the preferences of the agents occupying nodes in these two parts. See Mechanism~\ref{mech:left-right} for a formal description. 
We first show that every {\sc $\alpha$-Left-Right} is strategyproof.

\begin{algorithm}[t]
\caption{\sc $\alpha$-Left-Right}
\label{mech:left-right}
{\bf Input:} Instance $I$ with $n$ agents\;
{\bf Output:} Feasible solution $\bz = (z_1,z_2)$\;
$L \gets$ left part of line from node $1$ to node $\alpha$\;
$N(L) \gets$ agents that occupy nodes in $L$\;
$R \gets$ right part of line from node $\alpha+1$ to node $m$\;
$N(R) \gets$ agents that occupy nodes in $R$\;
\tcp*[h]{(case 1): Each part includes agents that approve only one, different facility} \\
\uIf{$\exists \, X,Y \in \{L,R\}$: $N_1 = N(X)$ and $N_2 = N(Y)$}{
    $z_1 \gets$ median node of line defined by $N(X)$ (ties in favor of nodes farther from $\alpha$)\;
    $z_2 \gets$ median node of line defined by $N(Y)$ (ties in favor of nodes farther from $\alpha$)\; 
}
\tcp*[h]{(case 2): One part includes agents that approve only one facility} \\
\uElseIf{$\exists \, \ell \in \{1,2\}, X \in \{L,R\}$: $N_\ell \subseteq N(X)$}{
    \If{$N_\ell$ is empty}{$X \gets L$\;}
    $z_\ell \gets$ median node of line defined by $N(X)$ (ties in favor of nodes farther from $\alpha$)\;
    $z_{3-\ell} \gets \beta \in \{\alpha,\alpha+1\}\setminus X$\;
}
\tcp*[h]{(case 3): Both parts include agents from $N_1$ and $N_2$} \\
\Else{
    $z_1 \gets$ rightmost node of $L$\;
    $z_2 \gets$ leftmost node of $R$\; 
}
\end{algorithm}

\begin{theorem}
For any $\alpha \in \{1,\ldots,m-1\}$, mechanism $\alpha$-LR is strategyproof.
\end{theorem}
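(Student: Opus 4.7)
The plan is to fix an arbitrary agent $i$ with true preference $\bt_i$ and show that no alternative report $\bt_i' \neq \bt_i$ can strictly decrease her cost. Two structural observations drive the argument. First, since the position profile $\bx$ is public, the partition of agents into $N(L)$ and $N(R)$, the medians $m_L$ of $N(L)$ and $m_R$ of $N(R)$, and the boundary nodes $\alpha$ and $\alpha+1$ are all independent of any preference report. Second, every facility location output by $\alpha$-LR lies in the finite menu $\{m_L, m_R, \alpha, \alpha+1\}$, so a misreport by agent $i$ only shifts which of the three cases triggers, and within Case~$2$ which pair $(\ell,X)$ is used, thereby selecting a different element of this menu for each facility.

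Without loss of generality I assume $i \in N(L)$, so $x_i \leq \alpha < \alpha+1 \leq m_R$; the case $i \in N(R)$ is handled symmetrically. The analysis then proceeds by cases on $\bt_i \in \{(0,0),(1,0),(0,1),(1,1)\}$. When $\bt_i = (0,0)$, agent $i$'s cost is zero regardless of the outcome, so deviation is fruitless. For each of the other three true preferences, I enumerate the three possible misreports, identify which case (and which $(\ell,X)$ where relevant) triggers on the truthful and on the misreported profile, and compare the resulting positions of the facilities $i$ truly approves. The central claim is that, for any facility $\ell$ that $i$ truly approves, no misreport can move $z_\ell$ strictly closer to $x_i$. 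A useful invariant is that when $i \in N(L) \cap N_\ell$, the Case~$2$ branch $(\ell, X=R)$ is infeasible (since $i \in N_\ell \setminus N(R)$), pinning down $z_\ell \in \{m_L, \alpha, \alpha+1\}$ under the truthful profile; a symmetric argument constrains the locations reachable under each misreport. For $\bt_i = (1,1)$, both facilities contribute, but the constraints on $z_1$ and $z_2$ decouple because the case structure imposes essentially independent conditions on $N_1$ and $N_2$.

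The main obstacle is the bookkeeping needed to cover all case transitions uniformly, particularly in Case~$2$ when multiple pairs $(\ell,X)$ simultaneously satisfy $N_\ell \subseteq N(X)$ (for instance, when an agent truly approving only facility~$1$ withdraws to approve neither, making both $N_1 \subseteq N(L)$ and $N_2 \subseteq N(R)$ admissible). A consistent tie-breaking convention in Case~$2$, together with the median tie-breaking ``farther from $\alpha$'', is needed to exclude the otherwise dangerous deviation in which a misreport would flip the mechanism onto a branch that serendipitously places $z_\ell$ exactly at $\alpha$. Once these conventions are pinned down, each of the finitely many adjacent-profile comparisons reduces to a short verification among the candidate distances to $x_i$.
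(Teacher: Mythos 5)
Your structural observations are sound: the partition $(L,R)$ and the two medians are determined by the public position profile, so every output lies in the fixed menu $\{m_L,m_R,\alpha,\alpha+1\}$ with $m_L\le\alpha<\alpha+1\le m_R$, and strategyproofness reduces to finitely many profile comparisons. This is essentially the same exhaustive case analysis the paper performs (organized there by which case of the mechanism the \emph{true} profile falls into, rather than by the deviating agent's type). However, your proposal stops at the level of a plan, and the one substantive claim you do commit to is false. You assert as the ``central claim'' that no misreport can move a truly approved facility strictly closer to $x_i$, and that for $\bt_i=(1,1)$ the two facilities decouple. Counterexample: let the true profile be in Case~3, let $i\in N(L)$ approve both facilities, and let $i$ be the \emph{only} agent of $N(L)$ in $N_1$. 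Truthfully $\bz=(\alpha,\alpha+1)$. If $i$ reports $(0,1)$, then $N_1\setminus\{i\}\subseteq N(R)$, Case~2 fires with $(\ell,X)=(1,R)$, and the output becomes $(m_R,\alpha)$: facility~$2$, which $i$ truly approves, has moved strictly closer (from $\alpha+1$ to $\alpha$). The deviation is still unprofitable only because facility~$1$ simultaneously moves from $\alpha$ to $m_R\ge\alpha+1$, i.e., away by at least as much as facility~$2$ moves in. So the per-facility monotonicity argument you propose breaks exactly where the proof is delicate; you must sum the two displacements, which is what the paper's Case~3 analysis does (and even there the correct conclusion is ``does not decrease,'' not ``increases'').

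Beyond that, all the actual verifications are deferred to ``a short verification among the candidate distances,'' and the tie-breaking issue you correctly flag in Case~2 (when both $N_1\subseteq N(L)$ and $N_2\subseteq N(R)$ hold but Case~1 does not fire) is left unresolved. It genuinely matters: if the mechanism truthfully selects $(\ell,X)=(2,R)$, yielding $z_1=\alpha$, an agent $i\in N_1$ positioned at $m_L$ who misreports $(1,1)$ forces $N_2\not\subseteq N(R)$, hence the branch $(\ell,X)=(1,L)$ and $z_1=m_L=x_i$, a strict improvement. So without pinning down the selection rule, the statement you are trying to prove can fail; identifying the problem is not the same as fixing it. To repair the proof you need (i) a fixed priority rule for Case~2, and (ii) a joint, signed accounting of both facilities' moves for agents approving both facilities, rather than the facility-by-facility argument you describe.
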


\begin{proof}
Consider any instance. 
We distinguish between the three cases considered by the mechanism. 

\medskip

\noindent
{\bf True preferences are as in case 1.} 
The mechanism locates facility $1$ at the median node of the line defined by $N(X)$, and facility $2$ at the median node of the line defined by $N(Y)$. 
It suffices to show that any agent $i \in N_1$ has no incentive to deviate; the case $i \in N_2$ is symmetric. If $i$ is the unique agent in $N(X)$, then she occupies the median node of the line defined by $N(X)$, where facility $1$ is located, and thus has zero cost. So, we can assume that there is some agent in $N(X) \setminus \{i\}$. If agent $i$ misreports by approving either just facility $2$ or both facilities, then we transition to case 2 with $N_1 \subseteq N(X)$, meaning that the location of facility $1$ remains the median of the line defined by $N(X)$. So, agent $i$ cannot decrease her cost, and has no incentive to deviate. 

\medskip

\noindent
{\bf True preferences are as in case 2.}
Suppose that $N_1 \subseteq N(L)$, while $N_2$ has agents in both $L$ and $R$; all other cases that fall under case 2 are symmetric. 
So, the mechanism locates facility $1$ at the median node of the line defined by $N(L)$, and facility $2$ at the leftmost node of $R$. Consider any agent $i$, and switch between all possible preferences of $i$:
\begin{itemize}
\item $i \in N_1$. 
Since all agents in $N(R)$ approve only facility $2$, we can never transition to case $3$ when $i$ misreports. Also, agent $i$ is indifferent between cases 1 and case 2 if she only approves facility $1$, and prefers case 2 to case 1 if she approves both facilities (since her position is closer to the leftmost node of $R$ than to the median node of $N(R)$). So, agent $i$ has no incentive to misreport.

\item $i \in (N_2 \setminus N_1) \cap N(L)$.
Similarly to the above case, we can never transition to case $3$ when $i$ misreports. Now, $i$ strictly prefers case 2 to case 1, as she wants facility $2$ to be located at the leftmost node of $R$, so her cost is minimized, and has no incentive to misreport.

\item $i \in (N_2 \setminus N_1) \cap N(R)$. Note that $i$ approves only facility $2$. If she misreports that she approves both facilities, then we transition to case 3, where the location of facility $2$ remains the same. If she misreports that she approves only facility $1$, then either the outcome remains the same if there is another agent in $N(R)$, or we transition to a symmetric case of case 2, where $N_2 \subseteq N(L)$ (while $N_1$ has agents in both $L$ and $R$), thus changing the location of facility $2$ from the leftmost node of $R$ to the median node of the line defined by $N(L)$. As this would increase the cost of agent $i$, she has no incentive to misreport.
\end{itemize}

\medskip

\noindent
{\bf True preferences are as in case 3.}
Since each of $N(L)$ and $N(R)$ contains agents from both $N_1$ and $N_2$, the mechanism locates facility $1$ at the rightmost node of $L$, and facility $2$ at the leftmost node of $R$.
Consider any agent $i\in N_\ell \cap N(X)$, where $\ell \in \{1,2\}$ and $X \in \{L,R\}$. 
Observe that if for each facility $j \in \{1,2\}$ there exists some agent in $N(X) \setminus \{i\}$ that approves $j$, then agent $i$ cannot affect the outcome; no matter what $i$ reports, we are still in case 3. So, we can assume that for some $j \in \{1,2\}$, all agents in $N(X)\setminus \{i\}$ approve only facility $j$. Since we are in case 3, we can also assume that $j \neq \ell$ (of course, agent $i$ might also approve $j$). To change the case considered by the mechanism, $i$ must completely agree with the other agents in $N(X)$ and report that she approves only facility $j$. This leads to a symmetric case of case 2, where $N_\ell \subseteq N(\{L,R\}\setminus X)$ (and $N_j$ contains agents in both $L$ and $R$), and hence facility $\ell$ is located at the median node of the line defined by $N(\{L,R\}\setminus X)$ and facility $j$ is still located at either $\alpha$ or $\alpha+1$. Clearly, the cost of agent $i$ can only increase as facility $\ell$ has moved farther away.
\end{proof}

Next, we focus on the approximation ratio of $\alpha$-LR mechanisms for the max cost. We distinguish between cases where the size $m$ of the line is an even or odd number, and show that there are values of $\alpha$ such that $\alpha$-LR achieves an approximation ratio of at most $2$. Before we do this, we prove a lemma providing lower bounds on the optimal max cost of a given instance, which we will use extensively. 

\begin{lemma} \label{lem:max-opt-bound}
Let $I$ be an instance. The following are true:
\begin{itemize}
\item[(a)] If there are two agents positioned at $x$ and $y > x$, and $q \in \{0,1,2\}$ is the number of facilities they both approve, then 
$$\MC^*(I) \geq q \cdot \frac{y-x}{2}.$$

\item[(b)] If there is an agent positioned at $x$ that approves both facilities, an agent positioned at $y > x$ that approves facility $1$, and an agent positioned at $z > y$ that approves facility $2$, then 
$$\MC^*(I) \geq \left\lceil \frac{y+z-2x}{3} \right\rceil.$$
\end{itemize}
\end{lemma}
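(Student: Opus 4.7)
The plan is to prove each of the two bounds directly, using the triangle inequality together with the fact that the maximum of a collection of numbers is at least their average. Both parts reduce to bounding sums of the individual agents' costs in any feasible solution $\bz = (z_1,z_2)$, and then dividing by the number of agents involved.

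For part (a), the case $q=0$ is trivial, so assume $q \geq 1$ and let $j$ be a facility that both agents approve. Since each agent's cost in $\bz$ includes her distance to $z_j$,
\begin{align*}
\max\{\cost_x(\bz|I), \cost_y(\bz|I)\} \geq \max\{|z_j-x|, |z_j-y|\} \geq \frac{|z_j-x|+|z_j-y|}{2} \geq \frac{y-x}{2},
\end{align*}
where the last inequality is the triangle inequality applied to $x, z_j, y$. This gives the bound for $q=1$. When $q=2$, both agents approve both facilities, and summing the two agents' costs yields
\begin{align*}
\cost_x(\bz|I)+\cost_y(\bz|I) = \bigl(|z_1-x|+|z_1-y|\bigr) + \bigl(|z_2-x|+|z_2-y|\bigr) \geq 2(y-x),
\end{align*}
so the maximum of the two costs is at least $y-x$. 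Taking $\MC^*(I)$ as the infimum of max costs over feasible $\bz$ gives the claimed bound $q(y-x)/2$ in all cases.

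For part (b), I plan to lower-bound the sum of the three relevant agent costs and then average. Since the agent at $x$ approves both facilities, the agent at $y$ approves only facility $1$, and the agent at $z$ approves only facility $2$,
\begin{align*}
\cost_x(\bz|I)+\cost_y(\bz|I)+\cost_z(\bz|I) &= \bigl(|z_1-x|+|z_2-x|\bigr) + |z_1-y| + |z_2-z| \\
&= \bigl(|z_1-x|+|z_1-y|\bigr) + \bigl(|z_2-x|+|z_2-z|\bigr) \\
&\geq (y-x) + (z-x) = y+z-2x,
\end{align*}
where the middle step just regroups by facility and the inequality is triangle inequality applied twice. Hence the maximum of the three costs is at least $(y+z-2x)/3$. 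Because all positions are integer nodes, every cost is a non-negative integer, so the maximum cost is an integer and the bound can be tightened to $\lceil (y+z-2x)/3 \rceil$.

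There is no real obstacle here; the only small choice is how to regroup the distances so that two independent triangle inequalities can be applied, one per facility. Once the grouping in part (b) is chosen correctly, averaging over the three agents immediately produces the stated bound, and the ceiling follows from integrality of distances in the discrete line.
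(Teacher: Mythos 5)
Your proof is correct and takes essentially the same route as the paper's: lower-bound each relevant agent's cost by her distance(s) to the facility or facilities she approves, sum these, and use the fact that the maximum is at least the average, with integrality of costs giving the ceiling in part (b). The only cosmetic difference is that you apply the triangle inequality directly where the paper instead restricts the facility locations to the relevant intervals before balancing the costs, but the underlying averaging argument is identical.
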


\begin{proof}
To show the two properties, we use the fact that the optimal cost for an instance is at least the optimal cost when we restrict to any subset of agents and any subset of facilities, in which case we aim to balance the cost of all the agents involved. We have:

\medskip

\noindent
{\bf (a)}
We begin with the case $q=1$ as the claim holds trivially when $q=0$. Clearly, if we place the facility before $x$ or after $y$, the claim holds. So, let us assume that we place the facility at node $a$ such that $x\leq a \leq y$. The cost of the agent at node $x$ is then (at least) $a-x$, while the cost of the agent at node $y$ is (at least) $y-a$. The claim follows since it cannot be that both $a-x$ and $y-a$ are strictly less than $\frac{y-x}{2}$.

When $q=2$, the claim follows if at least one facility is placed before $x$ or after $y$. Let us assume that we place the facilities at nodes $a$ and $b$ such that $x\leq \min\{a,b\}< \max\{a,b\}\leq y$. The cost of the agent at node $x$ is then $a+b-2x$, while the cost of the agent at node $y$ is $2y-a-b$. The claim follows since it cannot be that both $a+b-2x$ and $2y-a-b$ are strictly less than $y-x$.

\medskip

\noindent
{\bf (b)}
In this case, we want to locate facility $1$ at some node $a \in [x,y]$ and facility $2$ at some node $b \in [x,z]$, such that the maximum cost among the three agents is minimized. 
The cost of the agent at node $x$ is then $a+b-2x$, the cost of the agent at $y$ is $y-a$, while the cost of the agent at $z$ is $z-b$. As the sum of costs equals $y+z-2x$, it cannot be the case that all three costs are strictly less than $\frac{y+z-2x}{3}$. The claim follows since any cost must be an integer.
\end{proof}

We are now ready to bound the approximation ratio of particular $\alpha$-LR mechanisms. We start with instances where $m$ is an even number, for which we use $\alpha=m/2$; that is, we partition the line into two parts of equal size. 

\begin{theorem}
When $m$ is even, the $\MC$-approximation ratio of $m/2$-LR is at most $2$.
\end{theorem}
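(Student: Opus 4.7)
The plan is to establish $\MC(\bz|I) \leq 2\MC^*(I)$ by a case analysis tracking the three branches of mechanism $(m/2)$-LR. In each branch, I will derive an explicit upper bound on $\MC(\bz|I)$ from the mechanism's construction and match it with a lower bound on $\MC^*(I)$ coming from Lemma~\ref{lem:max-opt-bound}. The principal tool is part~(a) with $q=1$: if two agents at positions $x<y$ share at least one approved facility, then $\MC^*(I) \geq (y-x)/2$.

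In Case~1 each facility $j$ sits at the median of $N_j = N(X)$, so every $i \in N_j$ has distance to $z_j$ at most the range of $N_j$; applying Lemma~\ref{lem:max-opt-bound}(a) to the two extreme agents of $N_j$ gives $\MC^*(I) \geq \text{range}/2$, and the ratio is at most $2$. In Case~2, WLOG $N_1 \subseteq N(L)$ while $N_2$ has agents in both parts; the mechanism sets $z_1$ to the median of $N(L)$ and $z_2 = m/2+1$. The $N_1$-side cost is bounded as in Case~1. For the $N_2$-side, I use the case hypothesis together with the following key observation: since $N_1 \subseteq N(L)$ and position $m$ is occupied by some agent that must approve at least one facility, the agent at $m$ is forced to lie in $N_2 \cap N(R)$; combined with an $N_2 \cap N(L)$ agent $p_2$, Lemma~\ref{lem:max-opt-bound}(a) yields $\MC^*(I) \geq (m - p_2)/2$. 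A short calculation (comparing the maximum $N_2$ cost $\max(m/2+1-p_2, m/2-1)$ to $m - p_2$) then gives ratio at most $2$.

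In Case~3 the mechanism outputs $(m/2, m/2+1)$, and I split by the preference type of the worst-case agent. If $i \in N_1 \setminus N_2$ at some $x^* \in R$ with cost $x^* - m/2$, the case hypothesis provides another $N_1$ agent at some $y' \leq m/2$, so Lemma~\ref{lem:max-opt-bound}(a) gives $\MC^*(I) \geq (x^* - m/2)/2$ directly; the $x^* \in L$ and the $N_2 \setminus N_1$ subcases are symmetric. The hardest subcase is $i \in N_1 \cap N_2$ at $x^* \leq m/2$, with cost $m+1-2x^*$, because the central placement of the facilities is most mismatched for a boundary agent approving both. The trick is not to look locally but to exploit the convention on $m$: the agent at position $m$ is guaranteed to exist and must approve some facility, which is therefore shared with $i$; Lemma~\ref{lem:max-opt-bound}(a) then yields $\MC^*(I) \geq (m-x^*)/2$, and the elementary inequality $m+1-2x^* \leq m-x^*$ (valid for $x^* \geq 1$) delivers the ratio bound of $2$. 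The case $x^* \geq m/2+1$ is symmetric, appealing instead to the agent at position~$1$.
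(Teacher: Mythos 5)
Your overall architecture is the same as the paper's: a case analysis following the three branches of the mechanism, with Lemma~\ref{lem:max-opt-bound}(a) supplying the lower bounds, anchored at the occupied endpoints $1$ and $m$. Your Case~1 and Case~3 are sound (Case~3 is essentially the paper's argument, and your observation that the agent at node $m$ must share a facility with any agent in $N_1\cap N_2$ is exactly the right device).

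The gap is in Case~2. Splitting the analysis into an ``$N_1$-side'' and an ``$N_2$-side'' does not cover an agent in $N_1\cap N_2$, whose cost is the \emph{sum} of her distances to $z_1$ and $z_2$, and adding your two one-sided bounds is not enough: such an agent at position $x\le y_L$ has cost $(y_L-x)+(\tfrac m2+1-x)\le \tfrac{3(m+2)}{4}-2x$, which against the uniform lower bound $\MC^*(I)\ge \tfrac m4$ only yields a ratio close to $3$. What saves this subcase is the position-dependent bound $\MC^*(I)\ge \tfrac{m-x}{2}$, obtained by pairing the agent at $x$ (who approves facility $2$) with the agent at node $m$ --- precisely the trick you deploy in Case~3 but omit here --- after which one checks that $\tfrac{3m+6-8x}{2m-2x}\le 2$ for all $x\ge 1$ and $m\ge 2$ (the paper also needs the easier branch $x>y_L$, where the cost collapses to $\tfrac m2+1-y_L\le \tfrac m2$). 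A secondary inaccuracy in the same case: ``the $N_1$-side cost is bounded as in Case~1'' is not literally true, because $z_1$ is the median of the line defined by \emph{all} of $N(L)$, not of $N_1$; if $N_1$ is a singleton its range is $0$ and gives no lower bound, so you must instead use the absolute bound (distance at most roughly $m/4$) against $\MC^*(I)\ge m/4$ coming from the $N_2$ agents. Both defects are repairable, but as written Case~2 does not establish the ratio of $2$.
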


\begin{proof}
Consider any instance $I$.
We distinguish between the three cases considered by the mechanism.

\medskip

\noindent
{\bf Case 1.}
Since the agents in $N(X)$ approve only facility $1$ and the agents in $N(Y)$ approve only facility $2$, locating facility $1$ at the median node of the line defined by $N(X)$, and facility $2$ at the median node of the line defined by $N(Y)$ is the optimal solution. 

\medskip

\noindent
{\bf Case 2.}
Suppose that $N_1 \subseteq N(L)$ and that $N_2$ contains agents in both $N(L)$ and $N(R)$; this is one of the symmetric instances captured by case 2. The mechanism locates facility $1$ at the median node $y_L$ (with $1 \leq y_L \leq \lfloor \frac{m+2}{4}\rfloor$) of the line defined by $N(L)$, and facility $2$ at node $\frac{m}{2}+1$ (the leftmost node of $R$). We distinguish between the following cases depending on the preferences of the agents with the maximum cost for the solution $\bz$ computed by the mechanism.
\begin{itemize}
\item {\bf The cost of the mechanism is equal to the cost of an agent that approves a single facility.}  
As all agents that approve facility $1$ are in $N(L)$, and facility $1$ is located at the median of the line defined by $N(L)$, the cost of any agent that approves only facility $1$ can be at most $\max\{\lfloor\frac{m+2}{4}\rfloor-1, \frac{m}{2}-\lfloor\frac{m+2}{4}\rfloor\} \leq \frac{m}{4}$. 
Since facility $2$ is located at node $\frac{m}{2}+1$, the cost of any agent that approves only facility $2$ can be at most $\frac{m}{2}+1-1=\frac{m}{2}$. 
Hence, $\MC(\bz|I) \leq \frac{m}{2}$. As $N_2$ contains at least one agent in $N(L)$, there exists at least one agent at a node $x \leq \frac{m}{2}$ that approves facility $2$. By applying Lemma~\ref{lem:max-opt-bound}(a) with $x$ and $y=m$, we have that $\MC^*(I) \geq \frac{m-x}{2} \geq \frac{m}{4}$, yielding that the approximation ratio is at most $2$. 

\item {\bf The cost of the mechanism is equal to the cost of an agent that approves both facilities.}  
Since we are in case 2 with $N_1 \subseteq N_L$, let $x \leq m/2$ be the position of the agent $i$ that approves both facilities and has the maximum cost among all such agents. The cost of agent $i$, and thus of the mechanism, is
$\MC(\bz|I) = |x-y_L|+\frac{m}{2}+1-x$.

If $x > y_L$, we have $\MC(\bz|I) = \frac{m}{2}+1-y_L \leq \frac{m}{2}$. As in the case where the cost is due to an agent that approves a single facility, we have $\MC^*(I) \geq \frac{m}{4}$, and thus the approximation ratio is at most $2$.

Otherwise, if $x \leq y_L$, we have $\MC(\bz|I) = \frac{m}{2}+1+y_L-2x \leq \frac{3(m+2)}{4}-2x$. Since agent $i$ and the agent at node $m$ both approve facility $2$, by Lemma~\ref{lem:max-opt-bound}(a), we have that $\MC^*(I) \geq \frac{m-x}{2}$. Hence, the approximation ratio is at most $\frac{3m+6-8x}{2m-2x}$. As this is a non-increasing function in terms of $x$, it attains its maximum value of $\frac{3m-2}{2m-2}$ for $x=1$. For every $m \geq 2$, it holds that $\frac{3m-2}{2m-2} \leq 2$.
\end{itemize}

\medskip

\noindent
{\bf Case 3.}
Recall that the mechanism locates facility $1$ at $\frac{m}{2}$ (rightmost node of $L$), and facility $2$ at $\frac{m}{2}+1$ (leftmost node of $R$). 
Without loss of generality, we can assume that the agent at node $1$ approves facility $1$ (and possibly also facility $2$). 
We switch between the following two subcases:
\begin{itemize}
\item {\bf The cost of the mechanism is equal to the cost of an agent that approves a single facility.} 
Then, $\MC(\bz|I) \leq \frac{m}{2}$ (the distance between node $1$ and node $\frac{m}{2}+1$). As we are in case $3$, there exists an agent at some node $y \geq \frac{m}{2}+1$ that approves facility $1$, and by our assumption that the agent at node $1$ approves facility $1$, Lemma~\ref{lem:max-opt-bound}(a) gives $\MC^*(I) \geq \frac{y-1}{2} \geq \frac{m}{4}$. So, the approximation ratio is at most $2$. 

\item {\bf The cost of the mechanism is equal to the cost of an agent that approves both facilities.} 
Without loss of generality, let $x \leq m/2$ be the position of the agent $i$ that has the maximum cost among all agents that approve both facilities. 
Then, $\MC(\bz|I) = \frac{m}{2}-x + \frac{m}{2}+1 - x = m+1-2x$. As the agent at node $m$ approves some facility that is also approved by $i$, by Lemma~\ref{lem:max-opt-bound}(a), we get $\MC^*(I) \geq \frac{m-x}{2}$. The approximation ratio is $2\cdot \frac{m+1-2x}{m-x}$, which is a non-increasing function in terms of $x$, and attains its maximum value of $2$ for $x=1$.
\end{itemize}

In any case, the approximation ratio of the mechanism is $2$, and the theorem follows.
\end{proof}

For instances with odd $m$, we use $\alpha=(m+1)/2$. The proof of the following theorem is similar in structure with the previous theorem for even $m$, but is slightly more complicated. 

\begin{theorem}
When $m$ is odd, the $\MC$-approximation ratio of $(m+1)/2$-LR is at most $2$.
\end{theorem}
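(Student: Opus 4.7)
The plan is to mirror the three-case analysis from the proof for even $m$, with adjustments for the asymmetric parts $|L|=(m+1)/2$ and $|R|=(m-1)/2$. Case~1 is immediate: the mechanism places each facility at the median of its part and the two parts have disjoint approval sets, so $\MC(\bz|I)=\MC^*(I)$. For Case~2, I would treat the representative subcase $N_1 \subseteq N(L)$ with $N_2$ spanning both parts (the others are symmetric modulo the part asymmetry). The mechanism places facility~$1$ at the median $y_1$ of the line defined by $N(L)$ and facility~$2$ at $(m+3)/2$. Since $N_1 \subseteq N(L)$ and node~$m$ is occupied, the agent at $m$ must approve only facility~$2$; combined with some $N_2$-agent at position $x_0 \leq (m+1)/2$ in $L$, Lemma~\ref{lem:max-opt-bound}(a) gives $\MC^*(I) \geq (m-x_0)/2$. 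I would split on whether the worst agent approves a single facility or both (and further on her position relative to $y_1$ in the latter subcase), mirroring the even case and using the above opt bound to close each ratio at $\leq 2$.

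Case~3 is the delicate one. The mechanism places facility~$1$ at $(m+1)/2$ and facility~$2$ at $(m+3)/2$, and WLOG the agent at node~$1$ approves facility~$1$. For a worst agent approving a single facility, her cost is at most $(m+1)/2$; applying Lemma~\ref{lem:max-opt-bound}(a) to the node-$1$ agent and any $N_1$-agent in $R$ (forced by Case~3 to exist at position $\geq (m+3)/2$) yields $\MC^*(I) \geq (m+1)/4$, so the ratio is at most~$2$. The hard subcase is a worst agent $i$ at position $x \leq (m+1)/2$ approving both facilities, where $\MC(\bz|I) = m+2-2x$. Here the even-case Lemma~(a) argument only gives $\MC^*(I) \geq (m-x)/2$, yielding $2m/(m-1)>2$ at $x=1$, so I would invoke Lemma~\ref{lem:max-opt-bound}(b). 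Provided $|N(R)|\geq 2$, Case~3 guarantees two positions $y<z$ in $R$ approving different facilities (possibly after swapping the two facility labels, a legitimate move by the symmetry of the proof of Lemma~(b)); taking the leftmost admissible $y=(m+3)/2$ and $z=m$, and using that $m$ is odd so $\lceil(3m-1)/6\rceil = (m+1)/2$, the ratio becomes at most $2m/(m+1)<2$ even at $x=1$. The degenerate subcase $|N(R)|=1$ forces the single agent in $R$ (at node~$m$) to approve both facilities, and then Lemma~\ref{lem:max-opt-bound}(a) with $q=2$ gives $\MC^*(I)\geq m-1$, closing the ratio.

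The main obstacle will be the Lemma~(b) step in Case~3's both-facility subcase: verifying that across all admissible triples $(x, y, z)$ the ratio $(m+2-2x)/\lceil(y+z-2x)/3\rceil$ stays at most~$2$, with the tightest case $x=1$, $y=(m+3)/2$, $z=m$ relying on the parity-based simplification of the ceiling to $(m+1)/2$. A secondary, more routine challenge is the bookkeeping across Case~2's several near-symmetric subcases, each requiring the same template but adapted to the asymmetry $|L|\neq|R|$ by tracking how the median $y_1$ and the fixed position $(m+3)/2$ for facility~$2$ interact with the worst-case agent's location.
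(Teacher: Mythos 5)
Your plan follows essentially the same route as the paper's proof: the same three-case split, the same reliance on Lemma~(a) throughout, and the same application of Lemma~(b) with the parity computation $\left\lceil (3m-1)/6 \right\rceil = (m+1)/2$ to close the tight subcase of Case~3 at $x=1$. The only divergences are minor --- the paper splits the hard subcase of Case~3 on whether some agent in $R$ approves both facilities (using Lemma~(a) with $q=2$ there, and invoking Lemma~(b) only at $x=1$ while falling back to Lemma~(a) for $x\geq 2$) rather than on $|N(R)|$, and its Case~2 contains a few $m=3$ edge cases hiding inside what you call routine bookkeeping --- so your proposal is correct in approach.
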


\begin{proof}
Consider any instance $I$.
We distinguish between the three cases considered by the mechanism.

\medskip

\noindent
{\bf Case 1.}
Since the agents in $N(X)$ approve only facility $1$, and the agents in $N(Y)$ approve only facility $2$, locating facility $1$ at the median node of the line defined by $N(X)$, and facility $2$ at the median node of the line defined by $N(Y)$ is the optimal outcome. 

\medskip

\noindent
{\bf Case 2.}
Suppose that $N_1 \subseteq N(L)$ and that $N_2$ contains agents in both $N(L)$ and $N(R)$; this is one of the symmetric instances captured by case 2. The mechanism locates facility $1$ at the median node $y_L$ (with $1 \leq y_L \leq \lfloor \frac{m+3}{4}\rfloor$) of the line defined by $N(L)$, and facility $2$ at node $\frac{m+1}{2}+1 =\frac{m+3}{2}$ (the leftmost node of $R$). We distinguish between the following cases depending on the preferences of the agents with the maximum cost for the solution $\bz$ computed by the mechanism.

\begin{itemize}
\item {\bf The cost of the mechanism is equal to the cost of an agent that approves a single facility.}  
As all agents that approve facility $1$ are in $N(L)$, and facility $1$ is located at the median of the line defined by $N(L)$, the cost of any agent that approves only facility $1$ can be at most $\max\{\lfloor \frac{m+3}{4}\rfloor-1,\frac{m+1}{2} - \lfloor \frac{m+3}{4}\rfloor\} \leq \frac{m+1}{4}$. Let $x \leq \frac{m+1}{2}$ be the position of the leftmost agent that approves facility $2$. Since the agent at node $m$ also approves facility $2$, by Lemma~\ref{lem:max-opt-bound}(a), we have that $\MC^*(I) \geq \frac{m-x}{2}$. We now distinguish between two subcases, based on the value of $x$. 

If $x = \frac{m+1}{2}$, the maximum cost among agents that approve facility $2$ is at most $m - \frac{m+3}{2} = \frac{m-3}{2}$, and thus $\MC(\bz|I) \leq \max\left\{ \frac{m+1}{4}, \frac{m-3}{2} \right\}$. Since $\MC^*(I) \geq \frac{m-1}{4}$, the approximation ratio is at most $2$. 

Otherwise, if $x \leq \frac{m-1}{2}$, the maximum cost among agents that approve facility $2$ is a most $\frac{m+3}{2}-1=\frac{m+1}{2}$, and thus $\MC(\bz|I) \leq \max\left\{ \frac{m+1}{4}, \frac{m+1}{2} \right\}=\frac{m+1}{2}$. Since $\MC^*(I) \geq \frac{m-x}{2}\geq \frac{m+1}{4}$, the approximation ratio is again at most $2$.

\item {\bf The cost of the mechanism is equal to the cost of an agent that approves both facilities.}  
Since we are in case 2 with $N_1 \subseteq N_L$, let $x \leq \frac{m+1}{2}$ be the position of the agent $i$ that approves both facilities and has the maximum cost among all such agents. The cost of agent $i$, and thus of the mechanism, is 
$\MC(\bz|I) = |x-y_L|+\frac{m+3}{2}-x$.

If $x \leq y_L$, then since $y_L \leq \frac{m+3}{4}$, we have that $\MC(\bz|I) = \frac{m+3}{2}+y_L-2x \leq \frac{3(m+3)}{4}-2x$. As node $m$ is occupied by an agent that approves facility $2$, by Lemma~\ref{lem:max-opt-bound}(a), we have that $\MC^*(I) \geq \frac{m-x}{2}$, and thus the approximation ratio is $\frac{3m+9-8x}{2m-2x}$. This is a non-increasing function of $x \geq 1$, and attains its maximum value of $\frac{3m+1}{2m-2}$ for $x=1$. For every $m \geq 5$, it holds that $\frac{3m+1}{2m-2} \leq 2$. When $m=3$, for $x \leq y_L$ to be true, it has to be the case that $x=y_L=1$; so, the cost of agent $i$ for $\bz$ is $2$, while $\MC^*(I) \geq 1$, leading to an approximation ratio of at most $2$.

Otherwise, if $x > y_L$, for $y_L = 1$ to be possible, it would have to be the case that $m=3$ and $x=2$; then, the cost of agent $i$ is $2$, while $\MC^*(I) \geq 1$, and so the approximation ratio is at most $2$. Hence, assume that $y_L \geq 2$. Then, we have $\MC(\bz|I)=\frac{m+3}{2}-y_L \leq \frac{m-1}{2}$. Since $x \leq \frac{m+1}{2}$ and the agent at node $m$ approves facility $2$, by Lemma~\ref{lem:max-opt-bound}(a), we have that $\MC^*(I) \geq \frac{m-1}{4}$, and the approximation ratio is at most $2$. 
\end{itemize}

\medskip

\noindent
{\bf Case 3.}
Recall that in this case the mechanism locates facility $1$ at $\frac{m+1}{2}$, and facility $2$ at $\frac{m+3}{2}$. 
Without loss of generality, we assume that the agent at node $1$ approves facility $1$ (and possibly also facility $2$). 
We switch between two subcases:
\begin{itemize}
\item {\bf The cost of the mechanism is equal to the cost of an agent that approves a single facility.} 
Then, $MC(\bz|I) \leq \frac{m+3}{2}-1 = \frac{m+1}{2}$. As we are in case $3$, there exists an agent at some node $y \geq \frac{m+3}{2}$ that approves facility $1$, and by our assumption that the agent at node $1$ approves facility $1$, Lemma~\ref{lem:max-opt-bound}(a) gives $\MC^*(I) \geq \frac{y-1}{2} \geq \frac{m+1}{4}$, and the approximation ratio is at most $2$. 

\item {\bf The cost of the mechanism is equal to the cost of an agent that approves both facilities.} 
Without loss of generality, let $x \leq \frac{m+1}{2}$ be the position of the agent $i$ that has the maximum cost among agents that approve both facilities. Then, $\MC(\bz|I) = \frac{m+1}{2}-x + \frac{m+3}{2} - x = m-2x+2$. Since we are in case $3$, in $N(R)$, there exists an agent that approves facility $1$ and an agent that approves facility $2$. Consider the following two subcases:

If there is an agent $j$ at some node $y \geq \frac{m+3}{2}$ that approves both facilities, then, by Lemma~\ref{lem:max-opt-bound}(a), $\MC^*(I) \geq y-x \geq \frac{m+3}{2}-x$, and the approximation ratio is at most $2 \cdot \frac{m-2x+2}{m-2x+3} \leq 2$.

Otherwise, if there is no agent in $N(R)$ that approves both facilities, suppose that the agent at node $m$ approves facility $2$, and there exists an agent at some node $y \in \left[ \frac{m+3}{2}, m \right)$ that approves facility $1$. If $x=1$, then $\MC(\bz|I) \leq m$ and, by Lemma~\ref{lem:max-opt-bound}(b), $\MC^*(I) \geq \lceil \frac{y+m-2}{3} \rceil \geq \lceil \frac{m}{2} - \frac{1}{6} \rceil = \frac{m+1}{2}$; hence, the approximation ratio is at most $2$. If $x \geq 2$, it suffices to use the bound $\MC^*(I) \geq \frac{m-x}{2}$ implied by Lemma~\ref{lem:max-opt-bound}(a), to get an upper bound of $2\cdot \frac{m-2x+2}{m-x}$ on the approximation ratio. This is a non-increasing function of $x$, and thus attains its maximum value of $2$ for $x = 2$.
\end{itemize}

In any case, the approximation ratio is at most $2$, and the proof is complete.
\end{proof}

\subsection{A tight lower bound for deterministic mechanisms}

We conclude the presentation of our technical results with a tight lower bound of $2$ on the approximation ratio of any strategyproof mechanism with respect to the maximum cost objective.

\begin{theorem} \label{thm:max-lower}
The $\MC$-approximation ratio of any strategyproof mechanism is at least $2$.
\end{theorem}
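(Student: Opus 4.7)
The plan is to argue by contradiction: assume some strategyproof mechanism $M$ has $\MC$-approximation ratio strictly less than $2$, and derive inconsistent requirements on its output for a single carefully chosen instance. The construction uses five instances on a line with $m=3$ nodes and $n=3$ agents (hence no empty nodes). The central instance $I_0$ has all three agents approving both facilities; the four other instances $I_1,I_2,I_3,I_4$ are one-agent perturbations of $I_0$ obtained as follows: for each of the two extreme agents (at positions $1$ and $3$) and each facility $j\in\{1,2\}$, one instance has that agent drop her approval of facility $j$ while leaving all other preferences unchanged.

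I would first enumerate the six feasible placements $(z_1,z_2)\in\{1,2,3\}^2$ with $z_1\neq z_2$ and compute the max cost in each perturbed instance. A direct case check (whose bounds are consistent with Lemma~\ref{lem:max-opt-bound}(a)) shows that every $I_k$ with $k\in\{1,2,3,4\}$ satisfies $\MC^*(I_k)=1$, and that the minimizer is uniquely attained by the placement in which one facility sits at the central node $2$ and the other sits at the position of the extreme agent whose preferences were not perturbed. Since $M$'s approximation ratio is strictly less than $2$ and $\MC^*(I_k)=1$ is integral, $M(I_k)$ must equal this unique optimum.

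Next I would push each forced output back to $I_0$ via strategyproofness. On each $I_k$, the perturbed agent truthfully approves exactly one facility and pays cost $1$ under $M(I_k)$; if she instead misreports ``both'', the mechanism sees $I_0$ and returns some $M(I_0)=(z_1,z_2)$, and her true cost (still determined by her single truly approved facility) must remain at least $1$. The two perturbations of the agent at position $1$ yield $z_1\geq 2$ (when she truly approves only facility $1$) and $z_2\geq 2$ (when she truly approves only facility $2$); symmetrically, the two perturbations of the agent at position $3$ yield $z_1\leq 2$ and $z_2\leq 2$. These four inequalities force $z_1=z_2=2$, which violates $z_1\neq z_2$ and yields the desired contradiction.

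The main subtlety is bookkeeping: when the perturbed agent misreports ``both'', her cost on the resulting $M(I_0)$ must be evaluated using her true single-facility preference, not the reported one, and one must keep straight which coordinate of $(z_1,z_2)$ is constrained by which perturbation (the facility the agent truly approves determines which of $z_1,z_2$ appears in the inequality, while her position determines whether the inequality is $\geq 2$ or $\leq 2$). Once this accounting is done correctly, the four constraints mesh into the infeasible requirement $z_1=z_2=2$ and the lower bound follows.
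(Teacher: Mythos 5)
Your proof is correct, and it takes a genuinely different route from the paper's. The paper argues via a \emph{chain} of five instances: it starts from an instance where the two extreme agents approve facility $1$ and the middle agent approves facility $2$, pins the output up to symmetry, and then follows a sequence of single-agent deviations, branching into two sub-cases depending on which of the two admissible outputs the mechanism chooses at the second step. You instead build a \emph{star} around the all-approve instance $I_0$: each of the four perturbations (extreme agent at node $1$ or $3$ drops facility $1$ or $2$) has optimal max cost $1$ attained by a unique feasible solution, which a better-than-$2$-approximate mechanism is forced to output by integrality of the max cost; strategyproofness of each perturbed agent against the misreport ``both'' then yields the four constraints $z_1\geq 2$, $z_2\geq 2$, $z_1\leq 2$, $z_2\leq 2$ on $M(I_0)$, which contradict $z_1\neq z_2$. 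I verified the key claims: e.g., when the agent at node $1$ approves only facility $2$, the solution $(3,2)$ is the unique one with max cost $1$ (any solution giving the agent at node $3$, who approves both facilities, cost at most $1$ must place the facilities at nodes $2$ and $3$, and only the assignment with facility $2$ at node $2$ keeps the perturbed agent's cost at $1$). Your approach buys symmetry and the absence of case analysis -- no ``without loss of generality'' step and no branching, since the contradiction is a clean joint infeasibility of four half-line constraints -- at the cost of needing the uniqueness of the optimum in four instances rather than two admissible outputs in one. The only remaining work is the routine enumeration of the six feasible placements per instance, which you correctly anticipate; the accounting of which coordinate of $(z_1,z_2)$ each perturbation constrains is handled correctly.
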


\begin{proof}
Suppose that there exists a strategyproof mechanism $M$ with approximation ratio strictly smaller than $2$. We will reach a contradiction by examining a series of instances, all of which involve three agents and no empty nodes; see also Figure \ref{fig:MC-lb}.

We begin with instance $I_1$, in which the first and third agents approve only facility $1$, while the second agent approves only facility $2$. 
Clearly, $M$ must return either $(2,3)$ or $(2,1)$ as $\MC((2,3)|I_1) = \MC((2,1)|I_1) = 1$; any solution where facility $1$ is not placed at the second node has maximum cost $2$, and returning such a solution would contradict the assumption that the approximation ratio of $M$ is strictly smaller than $2$. 
Without loss of generality, let us assume that $M$ returns the solution $(2,3)$.

Next, consider instance $I_2$, in which the first agent approves only facility $1$, while the remaining agents approve only facility $2$. $M$ must output either $(2,3)$ or $(1,3)$ due to strategyproofness. Indeed, any solution where facility $2$ is not placed at the third node leads to a cost of at least $1$ for the third agent. But then, that agent would misreport that she only approves facility $1$, thus leading to instance $I_1$, and obtain a cost of $0$ for the resulting solution $(2,3)$.

If $M$ returns $(2,3)$ for instance $I_2$, consider instance $I_3$, in which the first agent approves both facilities, while the other two agents approve facility $2$. $M$ must return the optimal solution $(1,2)$ with $\MC((1,2)|I_3) = 1$, since any other solution leads to a maximum cost of at least $2$. In this case, however, the first agent in $I_2$ would misreport that she approves both facilities to reduce her cost from $1$ to $0$; this contradicts the assumption that $M$ is strategyproof.

Otherwise, when $M$ returns $(1,3)$ for $I_2$, consider instance $I_4$, in which the first agent approves only facility $1$, the second agent approves both facilities, and the last agent approves only facility $2$. There are two optimal solutions in $I_4$, $(2,3)$ and $(1,2)$, with $\MC((2,3)|I_4) = \MC((1,2)|I_4) = 1$; any other solution has maximum cost $2$. Out of these solutions, $(1,2)$ would give the second agent in $I_2$ incentive to misreport that she approves both facilities to reduce her cost from $1$ to $0$. Hence, $M$ must return $(2,3)$ when given as input $I_4$. To conclude the proof, consider  instance $I_5$, in which the first two agents approve both facilities, while the third agent approves only facility $2$. The optimal solution is $(1,2)$ with $\MC((1,2)|I_5)=1$; any other solution has maximum cost of at least $2$. But then, the first agent in $I_4$ has incentive to misreport that she approves both facilities to reduce her cost from $1$ to $0$; this again contradicts the fact that $M$ is strategyproof.
\end{proof}

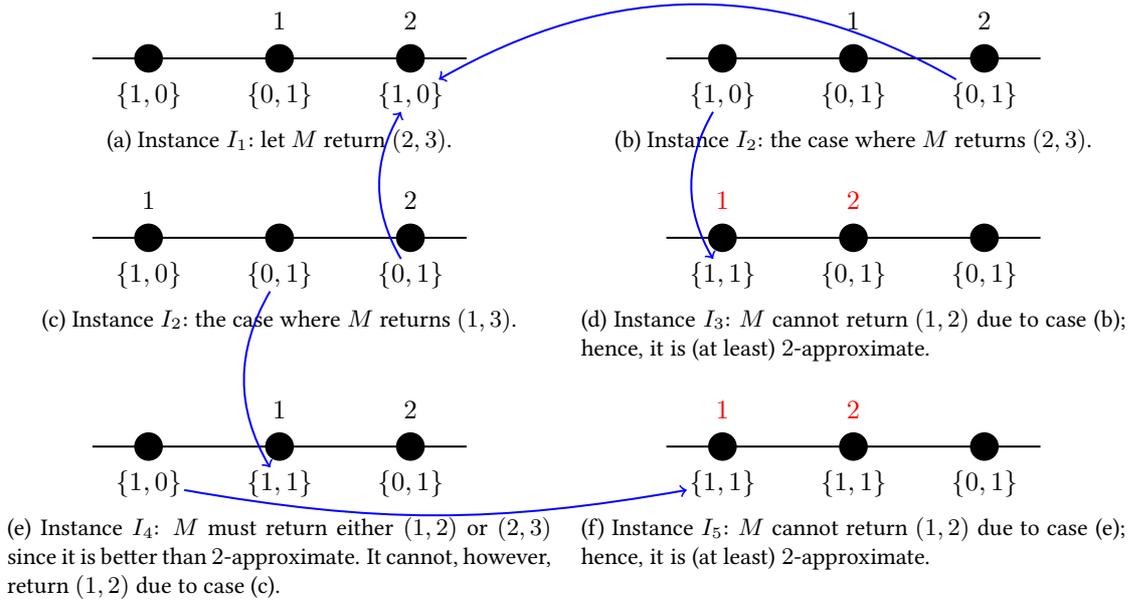
\begin{figure}[t]
\tikzset{every picture/.style={line width=0.75pt}} 
\begin{subfigure}[t]{0.45\linewidth}
\centering
\begin{tikzpicture}[x=0.7pt,y=0.7pt,yscale=-1,xscale=1, remember picture]
\draw [line width=0.75]  (0,0) -- (200,0) ;
\filldraw (30,0) circle (5pt);
\filldraw (100,0) circle (5pt);
\filldraw (170,0) circle (5pt);

\draw (30,20) node [inner sep=0.75pt] [font=\small]  {$\{1,0\}$};

\draw (100,-20) node [inner sep=0.75pt] [font=\small]  {$1$};
\draw (100,20) node [inner sep=0.75pt] [font=\small]  {$\{0,1\}$};

\draw (170,-20) node [inner sep=0.75pt] [font=\small]  {$2$};
\draw (170,20) node [inner sep=0.75pt]  (blue-a3) [font=\small]  {$\{1,0\}$};
\end{tikzpicture}
\caption{Instance $I_1$: let $M$ return $(2,3)$.}
\end{subfigure}
\ \ \ 
\begin{subfigure}[t]{0.45\linewidth}
\centering
\begin{tikzpicture}[x=0.7pt,y=0.7pt,yscale=-1,xscale=1, remember picture]
\draw [line width=0.75]  (0,0) -- (200,0) ;
\filldraw (30,0) circle (5pt);
\filldraw (100,0) circle (5pt);
\filldraw (170,0) circle (5pt);

\draw (30,20) node [inner sep=0.75pt]  (blue-b1) [font=\small]  {$\{1,0\}$};

\draw (100,-20) node [inner sep=0.75pt]  [font=\small]  {$1$};
\draw (100,20) node [inner sep=0.75pt]  [font=\small]  {$\{0,1\}$};

\draw (170,-20) node [inner sep=0.75pt]   [font=\small]  {$2$};
\draw (170,20) node [inner sep=0.75pt]  (blue-b3) [font=\small]  {$\{0,1\}$};
\end{tikzpicture}
\caption{Instance $I_2$: the case where $M$ returns $(2,3)$.}
\end{subfigure}

\bigskip

\begin{subfigure}[t]{0.45\linewidth}
\centering
\begin{tikzpicture}[x=0.7pt,y=0.7pt,yscale=-1,xscale=1, remember picture]
\draw [line width=0.75]  (0,0) -- (200,0) ;
\filldraw (30,0) circle (5pt);
\filldraw (100,0) circle (5pt);
\filldraw (170,0) circle (5pt);

\draw (30,-20) node [inner sep=0.75pt]  [font=\small]  {$1$};
\draw (30,20) node [inner sep=0.75pt]  [font=\small]  {$\{1,0\}$};

\draw (100,20) node [inner sep=0.75pt] (blue-c2) [font=\small]  {$\{0,1\}$};

\draw (170,-20) node [inner sep=0.75pt]  [font=\small]  {$2$};
\draw (170,20) node [inner sep=0.75pt]  (blue-c3) [font=\small]  {$\{0,1\}$};
\end{tikzpicture}
\caption{Instance $I_2$: the case where $M$ returns $(1,3)$.}
\end{subfigure}
\ \ \ 
\begin{subfigure}[t]{0.45\linewidth}
\centering
\begin{tikzpicture}[x=0.7pt,y=0.7pt,yscale=-1,xscale=1, remember picture]
\draw [line width=0.75]  (0,0) -- (200,0) ;
\filldraw (30,0) circle (5pt);
\filldraw (100,0) circle (5pt);
\filldraw (170,0) circle (5pt);

\draw (30,-20) node [inner sep=0.75pt]  [font=\small]  {\textcolor{red}{$1$}};
\draw (30,20) node [inner sep=0.75pt]  (blue-d1) [font=\small]  {$\{1,1\}$};

\draw (100,-20) node [inner sep=0.75pt]  [font=\small]  {\textcolor{red}{$2$}};
\draw (100,20) node [inner sep=0.75pt] (blue-d2) [font=\small]  {$\{0,1\}$};

\draw (170,20) node [inner sep=0.75pt]  [font=\small]  {$\{0,1\}$};
\end{tikzpicture}
\caption{Instance $I_3$: $M$ cannot return $(1,2)$ due to case (b); hence, it is (at least) $2$-approximate.}
\end{subfigure}

\bigskip

\begin{subfigure}[t]{0.45\linewidth}
\centering
\begin{tikzpicture}[x=0.7pt,y=0.7pt,yscale=-1,xscale=1, remember picture]
\draw [line width=0.75]  (0,0) -- (200,0) ;
\filldraw (30,0) circle (5pt);
\filldraw (100,0) circle (5pt);
\filldraw (170,0) circle (5pt);

\draw (30,20) node [inner sep=0.75pt] (blue-e1) [font=\small]  {$\{1,0\}$};

\draw (100,-20) node [inner sep=0.75pt]  [font=\small]  {$1$};
\draw (100,20) node [inner sep=0.75pt]  (blue-e2) [font=\small]  {$\{1,1\}$};

\draw (170,-20) node [inner sep=0.75pt]  [font=\small]  {$2$};
\draw (170,20) node [inner sep=0.75pt]  [font=\small]  {$\{0,1\}$};
\end{tikzpicture}
\caption{Instance $I_4$: $M$ must return either $(1,2)$ or $(2,3)$ since it is better than $2$-approximate. It cannot, however, return $(1,2)$ due to case (c).}
\end{subfigure}
\ \ \ 
\begin{subfigure}[t]{0.45\linewidth}
\centering
\begin{tikzpicture}[x=0.7pt,y=0.7pt,yscale=-1,xscale=1, remember picture]
\draw [line width=0.75]  (0,0) -- (200,0) ;
\filldraw (30,0) circle (5pt);
\filldraw (100,0) circle (5pt);
\filldraw (170,0) circle (5pt);

\draw (30,-20) node [inner sep=0.75pt] [font=\small]  {\textcolor{red}{$1$}};
\draw (30,20) node [inner sep=0.75pt]  (blue-f1) [font=\small]  {$\{1,1\}$};

\draw (100,-20) node [inner sep=0.75pt]  [font=\small]  {\textcolor{red}{$2$}};
\draw (100,20) node [inner sep=0.75pt]  (blue-f2) [font=\small]  {$\{1,1\}$};

\draw (170,20) node [inner sep=0.75pt]  [font=\small]  {$\{0,1\}$};
\end{tikzpicture}
\caption{Instance $I_5$: $M$ cannot return $(1,2)$ due to case (e); hence, it is (at least) $2$-approximate.}
\end{subfigure}
\\\ 
\caption{The instances used in the proof of Theorem \ref{thm:max-lower}. Each instance has $3$ agents and no empty nodes. The agent preferences appear below each node, while the facility assignment appears above the nodes. A black font denotes the mechanism's assignment, while a red font denotes an optimal but excluded assignment. Blue arrows denote how instances are related when a single agent's preferences change.}
\label{fig:MC-lb}
    \begin{tikzpicture}[remember picture,overlay]
        \path[->,blue] (blue-b3) edge [bend right=30] (blue-a3);
        \path[->,blue] (blue-c3) edge [bend left=30] (blue-a3);
        \path[->,blue] (blue-b1) edge [bend right=30] (blue-d1);
        \path[->,blue] (blue-c2) edge [bend right=30] (blue-e2);
        \path[->,blue] (blue-e1) edge [bend right=10] (blue-f1);
    \end{tikzpicture}
\end{figure}


\section{Conclusion and open problems}
In this paper, we revisited the discrete truthful heterogeneous two-facility location problem, and showed bounds on the approximation ratio of deterministic strategyproof mechanisms that greatly improve upon the previous best-known ones, both with respect to the social cost as well as the maximum cost. There are still many open questions and directions for future research. 

While we were able to show a bound for the social cost that is a small constant, thus improving upon the linear bound that was previously known, we were unable to close the gap between the lower bound of $4/3$ and the upper bound of $17/4$. Besides deterministic mechanisms, it would also be interesting to focus on randomized mechanisms and the maximum cost objective, for which there is a gap between the lower bound of $4/3$ and the upper bound of $3/2$ shown by \citet{serafino2016}. 

Going beyond the particular model studied here, there are many extensions to be considered. One could study settings with more than just two facilities, settings where the positions of the agents are their private information and can report empty nodes as their positions, settings with different heterogeneous preferences such as fractional or obnoxious ones, and also settings with more general location graphs such as trees or regular graphs.

\bibliographystyle{plainnat}
\bibliography{references}
\end{document}